\tikzstyle{matched}=[circle, draw, fill=black!30,inner sep=0pt, minimum width=4pt,fill=black!100]
\def\mU{{\mathbb U}}
\def\Beta{B}
\newtheorem{thm}{Theorem}
\newtheorem{lem}{Lemma}  
\newtheorem{coro}{Corollary}
\definecolor{xdxdff}{rgb}{0.49,0.49,1}
\definecolor{ffffff}{rgb}{1,1,1}
\definecolor{qqqqff}{rgb}{0,0,1}
\definecolor{ffqqqq}{rgb}{1,0,0}
\begin{document}

\title*{Spectral Correlation Hub Screening of Multivariate Time Series}
\author{Hamed Firouzi, Dennis Wei \and Alfred O. Hero III}
\institute{Electrical Engineering and Computer Science Department, University of Michigan, USA
\texttt{firouzi@umich.edu, dlwei@eecs.umich.edu, hero@eecs.umich.edu}}
%
%
\maketitle


\begin{abstract}
This chapter discusses correlation analysis of stationary multivariate Gaussian 
time series in the spectral or Fourier domain. The goal is to identify the hub time series, i.e., those that are highly correlated with a specified number of other time series.  We show that Fourier components of the time series at different frequencies are asymptotically statistically independent.  This property permits independent correlation analysis at each frequency, alleviating the computational and statistical challenges of high-dimensional time series. 
To detect correlation hubs at each frequency, 
an existing correlation screening method is extended to the complex numbers to accommodate complex-valued Fourier components.  We characterize the number of hub discoveries at specified correlation and degree thresholds in the regime of increasing dimension and fixed sample size.  The theory specifies appropriate thresholds 
to apply to sample correlation matrices 
to detect hubs and also allows statistical significance to be attributed to hub discoveries. 
Numerical results illustrate the accuracy of the theory 
and the usefulness of the proposed spectral framework.
\keywords{Complex-valued correlation screening, Spectral correlation analysis, Gaussian stationary processes, Hub screening, Correlation graph, Correlation network, Spatio-temporal analysis of multivariate time series, High dimensional data analysis}
\end{abstract}

\section{Introduction}
Correlation analysis of multivariate time series is important in many applications
such as 
wireless sensor networks, computer networks, neuroimaging, and finance \cite{vuran2004spatio,paffenroth2013space,friston2011statistical,zhang2003correlation,tsay2005analysis}.
%
This chapter focuses on the problem of detecting 
\emph{hub} time series, ones that have a high degree of interaction with other time series as measured by correlation or partial correlation. 
Detection of hubs can lead to reduced 
computational and/or sampling costs. 
For example in wireless sensor networks, the identification 
of hub nodes can 
be useful for reducing power usage and adding or removing sensors from the network \cite{stanley2012intelligent,li2008wireless}.
Hub detection can also give 
new insights about underlying structure in the dataset. 
In neuroimaging for instance, 
studies have consistently shown the existence of highly connected hubs in brain graphs (connectomes) \cite{bullmore2009complex}.
In finance, a hub might indicate a vulnerable financial instrument or a sector whose collapse could have a major effect on the market \cite{hero2012hub}. 

Correlation analysis becomes challenging for multivariate time series 
when the dimension $p$ of the time series, i.e.\ the number of scalar time series, 
and the 
number of time samples $N$ 
are large \cite{zhang2003correlation}. 
A naive approach is to treat the time series as a set of independent samples of a $p$-dimensional random vector 
and estimate the associated covariance or correlation matrix, 
but this approach completely ignores temporal correlations 
as it only considers dependences 
at the same time instant and not between different time instants. 
The work in \cite{chen2013covariance} accounts for temporal correlations by quantifying their effect on convergence rates in covariance and precision matrix estimation; however, only correlations at the same time instant are estimated.  
A more general approach is to consider all correlations between any two time instants 
of any two series within a window of $n \leq N$ consecutive samples, where the previous case corresponds to $n = 1$. 
However, in general this would entail the estimation of an $np \times np$ correlation matrix from a reduced sample of size $m = N/n$, which can be computationally costly as well as statistically problematic. 

In this chapter, we propose \emph{spectral} correlation analysis as a method of overcoming the issues discussed above.  As before, the time series are divided into $m$ temporal segments of $n$ consecutive samples, but instead of estimating temporal correlations directly, the method performs 
analysis on the Discrete Fourier Transforms (DFT) of the time series. 
We prove in Theorem \ref{thm:independence} that for stationary, jointly Gaussian 
time series under the mild condition of absolute summability of the auto- and cross-correlation functions, different Fourier components (frequencies) become asymptotically independent of each other as the DFT length $n$ 
increases. This 
property of stationary 
Gaussian processes allows us to focus on the $p\times p$ correlations 
at each frequency separately without having to consider 
correlations between different frequencies. 
Moreover, spectral analysis 
isolates correlations at 
specific frequencies or timescales, potentially leading to greater insight. 
To make aggregate inferences based on all frequencies, straightforward procedures for multiple inference can be used as described in Section \ref{sec:MultipleInf}.

The spectral approach reduces the detection of hub time series to the independent detection of hubs at each frequency.  However, in exchange for achieving spectral resolution, the sample size is reduced by the factor $n$, from $N$ to $m = N/n$. 
To confidently detect hubs in this high-dimensional, low-sample regime (large $p$, small $m$), as well as to accommodate complex-valued DFTs, we develop 
a method that 
we call \emph{complex-valued (partial) correlation screening}. 
This 
is a generalization of the correlation and partial correlation screening method of \cite{hero2011large,hero2012hub,firouzi2013predictive} to 
complex-valued random variables.  For each frequency, the method computes the sample (partial) correlation matrix of the DFT components of the $p$ time series. 
Highly correlated variables (hubs) are then identified by thresholding the sample correlation matrix at a level $\rho$ and screening for rows (or columns) with a specified number $\delta$ of 
non-zero entries.  

We characterize the behavior of complex-valued correlation screening in the high-dimensional regime of large $p$ and fixed sample size $m$.  Specifically, Theorem \ref{prop:parcor} and Corollary \ref{prop:parcor1} give asymptotic expressions in the limit $p \to \infty$ 
for the mean number of hubs detected at thresholds $\rho, \delta$ and the probability of discovering at least one such hub.  Bounds on the rates of convergence are also provided.  These results show that the number of hub discoveries undergoes a phase transition as $\rho$ decreases from $1$, from almost no discoveries to the maximum number, $p$.  An expression \eqref{CriticalThreshold} for the critical threshold $\rho_{c,\delta}$ is derived to guide the selection of $\rho$ under different settings of $p$, $m$, and $\delta$.  Furthermore, given a null hypothesis that the population correlation matrix is sufficiently sparse, 
the expressions in Corollary \ref{prop:parcor1} become independent of the underlying probability distribution and can thus be easily evaluated.  This allows the statistical significance of a hub discovery to be quantified, specifically in the form of a $p$-value under the null hypothesis.  We note that our results on complex-valued correlation screening apply more generally than to spectral correlation analysis and thus may be of independent interest. 

The remainder of the chapter is organized as follows. 
Section \ref{GaussianWSS} presents notation and definitions for multivariate time series and establishes the asymptotic independence of spectral components.  Section \ref{sec:corrScreen} describes complex-valued correlation screening and characterizes its properties in terms of numbers of hub discoveries and phase transitions. 
Section \ref{sec:MultipleInf} discusses the application of complex-valued correlation screening to the spectra of multivariate time series. 
Finally, Sec. \ref{sec:Sims} illustrates the applicability of the proposed framework through simulation analysis. 


\subsection{Notation}
A triplet $(\Omega,\mathcal{F},\mathbb{P})$ represents a probability space with sample space $\Omega$, $\sigma$-algebra of events $\mathcal{F}$, and probability measure $\mathbb{P}$. For an event $A \in \mathcal{F}$, $\mathbb{P}(A)$ represents the probability of $A$. Scalar random variables and their realizations are denoted with upper case and lower case letters, respectively. Random vectors and their realizations are denoted with bold upper case and bold lower case letters. The expectation operator is denoted as $\mathbb E$. 
For a random variable $X$, the cumulative probability distribution (cdf) of $X$ is defined as $F_{X}(x) = \mathbb{P}(X \leq x)$. For an absolutely continuous cdf $F_{X}(.)$ the probability density function (pdf) is defined as $f_{X}(x) = dF_{X}(x)/dx$. The cdf and pdf 
are defined similarly for random vectors. Moreover, we follow the definitions in \cite{durrett2010probability} for conditional probabilities, conditional expectations and conditional densities.


For a complex number $z = a+b\sqrt{-1} \in \mathbb{C}$, $\Re(z)=a$ and $\Im(z)=b$ represent the real and imaginary parts of $z$, respectively. A complex-valued random variable is composed of 
two real-valued random variables as its real and imaginary parts.  A complex-valued Gaussian variable 
has real and imaginary parts that are Gaussian. A complex-valued (Gaussian) random vector is a vector whose entries are complex-valued (Gaussian) random variables. 
The covariance of a $p$-dimensional complex-valued random vector $\bY$ and a $q$-dimensional complex-valued random vector $\bZ$ is a $p \times q$ matrix defined as
\[
\cov(\bY, \bZ) = \mathbb E\left[ (\bY - \mathbb E[\bY]) (\bZ - \mathbb E[\bZ])^{H} \right],
\]
where $^{H}$ denotes the Hermitian transpose.  We write $\cov(\bY)$ for $\cov(\bY,\bY)$ and $\var(Y) = \cov(Y,Y)$ for the variance of a scalar random variable $Y$.  The correlation coefficient between random variables $Y$ and $Z$ is defined as 
\[
\mathrm{cor}(Y,Z) = \frac{\cov(Y,Z)}{\sqrt{\var(Y) \var(Z)}}.
\]

Matrices are also denoted by bold upper case letters. In most cases the distinction between matrices and random vectors will be clear from the context. For a matrix $\bA$ we represent the $(i,j)$th entry of $\bA$ by $a_{ij}$. 
Also $\bD_{\bA}$ represents the diagonal matrix that is obtained by zeroing out all but the diagonal entries of $\bA$.

\section{Spectral Representation of Multivariate Time Series}
\label{GaussianWSS}

\subsection{Definitions}

Let $\bX(k) = [X^{(1)}(k), X^{(2)}(k), \cdots X^{(p)}(k)]$, $k \in \mathbb{Z}$, be a multivariate time series with time index $k$. 
We assume that the time series $X^{(1)}, X^{(2)}, \cdots X^{(p)}$ are second-order stationary 
random processes, i.e.:
\be
\mathbb E[X^{(i)}(k)] = \mathbb E[X^{(i)}(k+\Delta)]
\label{eq:transinv1}
\ee
and
\be
{\bf \cov} [X^{(i)}(k),X^{(j)}(l)] = {\bf \cov} [X^{(i)}(k+\Delta),X^{(j)}(l+\Delta)]
\label{eq:transinv2}
\ee
for any integer time shift $\Delta$.

For $1 \leq i \leq p$, let $\bX^{(i)} = [X^{(i)}(k),\cdots,X^{(i)}(k+n-1)]$ denote any vector of $n$ consecutive samples of time series $X^{(i)}$. The $n$-point Discrete Fourier Transform (DFT) of $\bX^{(i)}$ is denoted by $\bY^{(i)} = [Y^{(i)}(0),\cdots,Y^{(i)}(n-1)]$ and defined by 
\ben
\bY^{(i)}=\bW \bX^{(i)},~~~ 1 \leq i \leq p
\een
in which $\bW$ is the DFT matrix:
\ben
\bW = \frac{1}{\sqrt{n}}
 \begin{bmatrix}
  1 & 1 & \cdots & 1 \\
  1 & \omega & \cdots & \omega^{n-1} \\
  \vdots  & \vdots  & \ddots & \vdots  \\
  1 & \omega^{n-1} & \cdots & \omega^{(n-1)^2}
 \end{bmatrix},
\een
where $\omega = e^{-2\pi\sqrt{-1}/n}$.

We denote the $n \times n$ population covariance matrix of $\bX^{(i)}$ as $\bC^{(i,i)}=[c^{(i,i)}_{kl}]_{1 \leq k,l \leq n}$ and the $n \times n$ population cross covariance matrix between $\bX^{(i)}$ and $\bX^{(j)}$ as $\bC ^{(i,j)}=[c^{(i,j)}_{kl}]_{1\leq k,l \leq n}$ for $i \neq j$. The translation invariance properties \eqref{eq:transinv1} and \eqref{eq:transinv2} imply that $\bC^{(i,i)}$ and $\bC^{(i,j)}$ are Toeplitz matrices. Therefore  $c^{(i,i)}_{kl}$ and $c^{(i,j)}_{kl}$ depend on $k$ and $l$ only through the quantity $k-l$.  
Representing the $(k,l)$th entry of a Toeplitz matrix $\bT$ by $t(k-l)$, we write 
\ben
c^{(i,i)}_{kl} = c^{(i,i)}(k-l)
~~~\text{and}~~~
c^{(i,j)}_{kl} = c^{(i,j)}(k-l),
\een
where $k-l$ takes values from $1-n$ to $n-1$. In addition, $\bC^{(i,i)}$ is symmetric. 

\subsection{Asymptotic Independence of Spectral Components}

The following theorem states that for stationary 
time series, 
DFT components at different spectral indices (i.e.\ frequencies) 
are asymptotically uncorrelated under the condition that the auto-covariance 
and cross-covariance 
functions are absolutely summable. This theorem follows directly from the spectral theory of large Toeplitz matrices, see, for example, \cite{grenander1958toeplitz} and \cite{Gray06}. However, for the benefit of the reader we give a self contained proof of the theorem.

\begin{thm}
Assume $\lim_{n \rightarrow \infty} \sum_{t=0}^{n-1} |c^{(i,j)}(t)| = M^{(i,j)}< \infty$ for all $1 \leq i,j \leq p$. Define $\mathrm{err}^{(i,j)}(n) = M^{(i,j)} - \sum_{m'=0}^{n-1} |c^{(i,j)}(m')|$ and $\mathrm{avg}^{(i,j)}(n) = \frac{1}{n} \sum_{m'=0}^{n-1} \mathrm{err}^{(i,j)}(m')$. Then for $k \neq l$, we have:
\ben
\mathrm{cor}\left(Y^{(i)}(k),Y^{(j)}(l)\right) = O(\max \{ 1/n, \mathrm{avg}^{(i,j)}(n)  \}).
\een
In other words $Y^{(i)}(k)$ and $Y^{(j)}(l)$ are asymptotically uncorrelated 
as $n \to \infty$.
\label{thm:independence}
\end{thm}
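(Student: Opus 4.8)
The plan is to work directly with the covariance $\cov(Y^{(i)}(k),Y^{(j)}(l))$ and to convert it to a correlation only at the end, after controlling the normalizing variances. Writing the DFT explicitly as $Y^{(i)}(k) = \frac{1}{\sqrt n}\sum_{s=0}^{n-1}\omega^{ks}X^{(i)}(s)$ and using $\cov(X^{(i)}(s),X^{(j)}(t)) = c^{(i,j)}(s-t)$ together with the fact that the entries of $\bX^{(i)}$ are real, I would first obtain the sesquilinear expression
\[
\cov\!\left(Y^{(i)}(k),Y^{(j)}(l)\right) = \frac{1}{n}\sum_{s,t=0}^{n-1}\omega^{ks}\,\omega^{-lt}\,c^{(i,j)}(s-t),
\]
which is exactly the $(k,l)$ entry of $\bW\bC^{(i,j)}\bW^{H}$. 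The guiding idea is that if $\bC^{(i,j)}$ were circulant rather than merely Toeplitz, then $\bW$ would diagonalize it and every off-diagonal entry ($k\neq l$) would vanish identically; the claimed bound should therefore measure the failure of $\bC^{(i,j)}$ to be circulant, which the summability hypothesis forces to be small.

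Next I would reindex the double sum by the lag $\tau = s-t$, writing $\omega^{ks-lt}=\omega^{k\tau}\omega^{(k-l)t}$, so that for each fixed $\tau$ the inner sum over $t$ is a partial geometric sum $\sum_t \omega^{(k-l)t}$. The decisive step is the orthogonality of the DFT: since $k\neq l$ with $0\le k,l\le n-1$, the full sum $\sum_{t=0}^{n-1}\omega^{(k-l)t}$ vanishes. Subtracting the full (zero) sum lets me rewrite each partial sum as a short ``corner'' sum of only $|\tau|$ unit-modulus terms, giving the clean estimate
\[
\left|\cov\!\left(Y^{(i)}(k),Y^{(j)}(l)\right)\right| \le \frac{1}{n}\sum_{\tau=-(n-1)}^{n-1}|\tau|\,\bigl|c^{(i,j)}(\tau)\bigr|.
\]
Alternatively, bounding the same partial geometric sum by its modulus $1/|\sin(\pi(k-l)/n)|$ and using $\sum_\tau|c^{(i,j)}(\tau)|\le 2M^{(i,j)}$ yields a bound of order $1/n$ whenever the two frequency indices are well separated; the two estimates combine to give the stated $O(\max\{1/n,\mathrm{avg}^{(i,j)}(n)\})$.

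To identify the first estimate with $\mathrm{avg}^{(i,j)}(n)$ I would use summation by parts. Splitting into nonnegative and negative lags (and using $c^{(i,j)}(-\tau)=c^{(j,i)}(\tau)$ for real processes), the key identity is
\[
\frac{1}{n}\sum_{m=0}^{n-1} m\,\bigl|c^{(i,j)}(m)\bigr| = \frac{1}{n}\sum_{m=0}^{n-1}\sum_{t=0}^{m-1}\bigl|c^{(i,j)}(m)\bigr| \le \frac{1}{n}\sum_{t=0}^{n-1}\mathrm{err}^{(i,j)}(t) = \mathrm{avg}^{(i,j)}(n),
\]
where I have exchanged the order of summation and recognized the tail sums $\mathrm{err}^{(i,j)}(t)=\sum_{m\ge t}|c^{(i,j)}(m)|$. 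This shows $|\cov(Y^{(i)}(k),Y^{(j)}(l))| = O(\mathrm{avg}^{(i,j)}(n)+\mathrm{avg}^{(j,i)}(n))$.

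Finally, to pass from covariance to the correlation in the statement, I would control the denominator $\sqrt{\var(Y^{(i)}(k))\var(Y^{(j)}(l))}$. Computing $\var(Y^{(i)}(k))=\sum_{\tau}(1-|\tau|/n)\omega^{k\tau}c^{(i,i)}(\tau)$ and applying dominated convergence (justified by absolute summability) shows it converges to the spectral density $f^{(i)}$ at the corresponding frequency; provided this limit is positive, the variances are bounded away from $0$ for large $n$, so the correlation is of the same order as the covariance and the theorem follows. The main obstacle I anticipate is the second step, namely the careful bookkeeping that turns the partial geometric sums into the corner bound via DFT orthogonality, since this is precisely where the Toeplitz-versus-circulant discrepancy is quantified; the summation-by-parts identity tying the estimate to the specific quantity $\mathrm{avg}^{(i,j)}(n)$ is the other delicate point, while the variance normalization is routine once positivity of the spectral density is assumed.
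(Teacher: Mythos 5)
Your proposal is correct and follows essentially the same route as the paper: both arguments quantify the Toeplitz-versus-circulant discrepancy (your ``corner sum'' via DFT orthogonality is exactly the paper's decomposition $\bC^{(i,j)} = \bD_{\bC^{(i,j)}} + \bE^{(i,j)}$ with the circulant part annihilated by Plancherel), arrive at the same bound $\frac{1}{n}\sum_{\tau}|\tau|\,|c^{(i,j)}(\tau)|$, and tie it to $\mathrm{avg}^{(i,j)}(n)$ by the same Abel-summation argument (the paper's Lemma~\ref{lem:convergence}), finishing with the identical variance-convergence step. Your interchange-of-summation bound is in fact marginally cleaner than the paper's $M/n + \mathrm{err}(n) + \mathrm{avg}(n)$, but the proofs are the same in substance.
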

\begin{proof}
Without loss of generality we assume that the time series have zero mean (i.e. $\mathbb E[X^{(i)}(k)] = 0, 1 \leq i \leq p, 0 \leq k \leq n-1$). We first establish a representation of $\mathbb E[Z^{(i)}(k)Z^{(j)}(l)^*]$ for general linear functionals:
\ben
Z^{(i)}(k) = \sum_{m'=0}^{n-1} g_k(m')X^{(i)}(m'),
\een
in which $g_k(.)$ is an arbitrary complex sequence for $0 \leq k\leq n-1$. We have:
\be
&& \mathbb E[Z^{(i)}(k)Z^{(j)}(l)^*] \nonumber \\ &=& \mathbb E\left[\left( \sum_{m'=0}^{n-1} g_k(m')X^{(i)}(m') \right)\left(\sum_{n'=0}^{n-1} g_l(n')X^{(j)}(n')\right)^*\right] \nonumber \\
&=& \sum_{m'=0}^{n-1} g_k(m') \sum_{n'=0}^{n-1} g_l(n')^* \mathbb E[X^{(i)}(m') X^{(j)}(n')^*] \nonumber \\
&=&\sum_{m'=0}^{n-1} g_k(m') \sum_{n'=0}^{n-1} g_l(n')^* c^{(i,j)}_{m'n'} \label{CovExpr}
\ee
Now for a Toeplitz matrix $\bT$, define the circulant matrix $\bD_{\bT}$ as:
\ben
\bD_{\bT} = 
 \begin{bmatrix}
  t (0)& t(-1)+t(n-1) & \cdots & t(1-n)+t(1) \\
  t(1)+t(1-n) & t(0) & \cdots & t(2-n) +t(2) \\
  \vdots  & \vdots  & \ddots & \vdots  \\
  t(n-2) + t(-2) ~& ~~t(n-3) + t(-3) ~&~ \cdots & ~t(-1) + t(n-1) \\
  t(n-1) + t(-1) ~& ~~t(n-2) + t(-2) ~&~ \cdots & t(0)
 \end{bmatrix}
\een
We can write:
\ben
\bC^{(i,j)} = \bD_{\bC^{(i,j)}} + \bE^{(i,j)}
\label{eq:CircApprox}
\een
for some Toeplitz matrix $\bE^{(i,j)}$. Thus $c^{(i,j)}(m'-n') = d^{(i,j)}(m'-n') + e^{(i,j)}(m'-n')$ where $d^{(i,j)}(m'-n')$ and $e^{(i,j)}(m'-n')$ are the $(m',n')$ entries of $\bD_{\bC^{(i,j)}}$ and $\bE^{(i,j)}$, respectively. Therefore, \eqref{CovExpr} can be written as:
\ben
\sum_{m'=0}^{n-1} g_k(m') \sum_{n'=0}^{n-1} g_l(n')^*  d^{(i,j)}(m'-n') + \sum_{m'=0}^{n-1}\sum_{n'=0}^{n-1} g_k(m')  g_l(n')^* e^{(i,j)}(m'-n')
\een
The first term can be written as:
\ben
\sum_{m'=0}^{n-1} g_k(m') \left(g_l^*  \circledast d^{(i,j)}\right) (m') = \sum_{m'=0}^{n-1} g_k(m') v^{(i,j)}_l(m')
\een
where we have recognized $v^{(i,j)}_l(m')= g_l^*  \circledast d^{(i,j)}$ as the circular convolution of $g^*_l(.)$ and $d^{(i,j)}(.)$ \cite{oppenheim1989discrete}. Let $G_k(.)$ and $D^{(i,j)}(.)$ be the the DFT of $g_k(.)$ and $d^{(i,j)}(.)$, respectively. By Plancherel's theorem \cite{conway1990course} we have:
\be
\sum_{m'=0}^{n-1} g_k(m') v^{(i,j)}_l(m')
 &=&
\sum_{m'=0}^{n-1} g_k(m') \left(v^{(i,j)}_l(m')^* \right)^* \nonumber \\ &=& \sum_{m'=0}^{n-1} G_k(m') \left( G_l(m')D^{(i,j)}(-m')^* \right)^* \nonumber \\ &=& \sum_{m'=0}^{n-1} G_k(m') G_l(m')^* D^{(i,j)}(-m'). \label{firsttrem}
\ee
Now let $g_k(m')=\omega^{km'}/\sqrt{n}$ for $0 \leq k,m' \leq n-1$. For this choice of $g_k(.)$ we have $G_k(m') = 0$ for all $m' \neq n-k$ and $G_k(n-k)=1$. Hence for $k \neq l$ the quantity \eqref{firsttrem} becomes $0$. Therefore using the representation $\bE^{(i,j)} = \bC^{(i,j)} - \bD_{\bC^{(i,j)}}$ we have:
\be
\label{eq:Ycov_bound}
|\cov\left(Y^{(i)}(k),Y^{(j)}(l)\right)|  &=& |\mathbb E[Y^{(i)}(k)Y^{(j)}(l)^*]|  \nonumber \\ &=& |\sum_{m'=0}^{n-1}\sum_{n'=0}^{n-1} g_k(m')  g_l(n')^* e^{(i,j)}(m'-n')|  \nonumber \\ &\leq&
\frac{1}{n} \sum_{m'=0}^{n-1}\sum_{n'=0}^{n-1} |e^{(i,j)}(m'-n')| \nonumber \\ &=& \frac{2}{n} \sum_{m'=0}^{n-1} m'|c^{(i,j)}(m')|,
\ee
in which the last equation is due to the fact that $|c^{(i,j)}(-m')| = |c^{(i,j)}(m')|$. 

Now using \eqref{firsttrem} and \eqref{eq:Ycov_bound} we obtain expressions for $\mathrm{var}\left(Y^{(i)}(k)\right)$ and $\mathrm{var}\left(Y^{(j)}(l)\right)$. Letting $j=i$ and $l=k$ in \eqref{firsttrem} and \eqref{eq:Ycov_bound} gives:
\be
&& \mathrm{var}\left(Y^{(i)}(k)\right) = \cov\left(Y^{(i)}(k),Y^{(i)}(k)\right) \nonumber  \\ \nonumber &=& \sum_{m'=0}^{n-1} G_k(m') G_k(m')^* D^{(i,i)}(-m') + \sum_{m'=0}^{n-1}\sum_{n'=0}^{n-1} g_k(m')  g_k(n')^* e^{(i,i)}(m'-n') \\ &=& n . \frac{1}{\sqrt{n}} . \frac{1}{\sqrt{n}}  D^{(i,i)}(k)  + \sum_{m'=0}^{n-1}\sum_{n'=0}^{n-1} g_k(m')  g_k(n')^* e^{(i,i)}(m'-n') \nonumber  \\ &=&  D^{(i,i)}(k)  + \sum_{m'=0}^{n-1}\sum_{n'=0}^{n-1} g_k(m')  g_k(n')^* e^{(i,i)}(m'-n'),
\label{eq:Var_i}
\ee
in which the magnitude of the summation term is bounded as:
\be
&&|\sum_{m'=0}^{n-1}\sum_{n'=0}^{n-1} g_k(m')  g_k(n')^* e^{(i,i)}(m'-n')|  \nonumber \\ &\leq&
\frac{1}{n} \sum_{m'=0}^{n-1}\sum_{n'=0}^{n-1} |e^{(i,i)}(m'-n')| \nonumber \\ &=& \frac{2}{n} \sum_{m'=0}^{n-1} m'|c^{(i,i)}(m')|.
\label{eq:Var_i_err_bound}
\ee
Similarly:
\be
\mathrm{var}\left(Y^{(j)}(l)\right) =  D^{(j,j)}(l)  + \sum_{m'=0}^{n-1}\sum_{n'=0}^{n-1} g_l(m')  g_l(n')^* e^{(j,j)}(m'-n'),
\label{eq:Var_j}
\ee
in which
\be
&&|\sum_{m'=0}^{n-1}\sum_{n'=0}^{n-1} g_l(m')  g_l(n')^* e^{(j,j)}(m'-n')|  \nonumber \\ &\leq&
\frac{2}{n} \sum_{m'=0}^{n-1} m'|c^{(j,j)}(m')|.
\label{eq:Var_j_err_bound}
\ee

To complete the proof the following lemma is needed.
\begin{lem}
\label{lem:convergence}
If $\{a_{m'}\}_{m'=0}^{\infty}$ is a sequence of non-negative numbers such that $\sum_{m'=0}^{\infty} a_{m'} = M < \infty$. Define $\mathrm{err}(n) = M - \sum_{m'=0}^{n-1} a_{m'}$ and $\mathrm{avg}(n) = \frac{1}{n} \sum_{m'=0}^{n-1} \mathrm{err}(m')$. Then $ |\frac{1}{n} \sum_{m'=0}^{n-1} m'a_{m'}|  \leq  M/n + \mathrm{err}(n) + \mathrm{avg}(n)$.
\end{lem}
\begin{proof}
Let $S_0=0$ and for $n \geq 1$ define $S_n = \sum_{m'=0}^{n-1} a_{m'}$. We have:
\ben
\sum_{m'=0}^{n-1} ma_{m'} = (n-1)S_{n} - (S_0 + S_1 + \ldots + S_{n-1}).
\een
Therefore:
\ben
\frac{1}{n}\sum_{m'=0}^{n-1} m'a_{m'} = \frac{n-1}{n} S_{n-1} - \frac{1}{n} \sum_{m'=0}^{n-1} S_{m'}.
\label{eq:avg_mul_sum}
\een
Since $ M - M/n - \mathrm{err}(n) \leq \frac{n-1}{n}S_{n-1} \leq M$ and $M - \mathrm{avg}(n) \leq \frac{1}{n} \sum_{m'=0}^{n-1} S_{m'} \leq M$, using the triangle inequality the result follows.
\qed
\end{proof}
Now let $a_{m'} = |c^{(i,j)}(m')|$. By assumption $\lim_{n \rightarrow \infty} \sum_{m'=0}^{n-1} a_{m'} = M^{(i,j)} < \infty$. Therefore, Lemma \ref{lem:convergence} along with \eqref{eq:Ycov_bound} concludes:
\be
\cov\left(Y^{(i)}(k),Y^{(j)}(l)\right) = O(\max \{ 1/n, \mathrm{err^{(i,j)}}(n), \mathrm{avg^{(i,j)}}(n)  \}).
\label{eq:cov_bound}
\ee
$\mathrm{err}^{(i,j)}(n)$ is a decreasing decreasing function of $n$. Therefore $\mathrm{avg}^{(i,j)}(n) \geq \mathrm{err}^{(i,j)} (n)$, for $n \geq 1$. Hence:
\ben
\cov\left(Y^{(i)}(k),Y^{(j)}(l)\right) = O(\max \{ 1/n, \mathrm{avg}^{(i,j)}(n)  \}).
\een
Similarly using Lemma \ref{lem:convergence} along with \eqref{eq:Var_i}, \eqref{eq:Var_i_err_bound}, \eqref{eq:Var_j} and \eqref{eq:Var_j_err_bound} we obtain:
\be
|\mathrm{var}\left(Y^{(i)}(k)\right) - D^{(i,i)}(k)| =  O(\max \{ 1/n, \mathrm{avg}^{(i,i)}(n)  \}),
\label{eq:Var_i_conv}
\ee
and
\be
|\mathrm{var}\left(Y^{(j)}(l)\right) - D^{(j,j)}(l)| =  O(\max \{ 1/n, \mathrm{avg}^{(j,j)}(n)  \}).
\label{eq:Var_j_conv}
\ee
Using the definition 
\ben
\mathrm{cor}\left(Y^{(i)}(k),Y^{(j)}(l)\right) = \frac{\cov\left(Y^{(i)}(k),Y^{(j)}(l)\right)} {\sqrt{\mathrm{var}\left(Y^{(i)}(k)\right)} \sqrt{ \mathrm{var}\left(Y^{(j)}(l)\right)}},
\een
and the fact that as $n \rightarrow \infty$, $D^{(i,i)}(k)$ and $D^{(j,j)}(l)$ converge to constants $\bC^{(i,i)}(k)$ and $\bC^{(j,j)}(l)$, respectively, equations \eqref{eq:cov_bound}, \eqref{eq:Var_i_conv} and \eqref{eq:Var_j_conv} conclude:
\ben
\mathrm{cor}\left(Y^{(i)}(k),Y^{(j)}(l)\right) = O(\max \{ 1/n, \mathrm{avg}^{(i,j)}(n)  \}).
\een
\flushright\qed
\end{proof}

As an example we apply Theorem \ref{thm:independence} 
to a scalar auto-regressive (AR) process $X(k)$ specified by 
\ben
X(k) = \sum_{l=1}^{L} \varphi_l X(k-l) + \varepsilon(k),
\een
in which $\varphi_{l}$ are real-valued coefficients and $\varepsilon(.)$ is a stationary 
process with no temporal correlation. The auto-covariance function of an AR process can be written as \cite{hamilton1994time}:
\ben
c(t) = \sum_{l=1}^L \alpha_l r_l^{|t|},
\een
in which $r_1,\ldots,r_l$ are the roots of the polynomial $\beta(x) = x^L - \sum_{l=1}^L \varphi_l x^{L-l}$. It is known that for a stationary AR process, $|r_l| < 1$ for all $1 \leq l \leq L$ \cite{hamilton1994time}. Therefore, using the definition of $\mathrm{err(.)}$ we have:
\ben 
\mathrm{err}(n) &=& \sum_{t=n}^{\infty}|c(t)|  = \sum_{t=n}^{\infty} |\sum_{l=1}^L \alpha_l r_l^{t}|  \leq  \sum_{l=1}^L |\alpha_l| \sum_{t=n}^\infty |r_l|^t \\ &=&  \sum_{l=1}^L |\alpha_l|\frac{|r_l|^n}{1-|r_l|} \leq C \zeta^n,
\een
in which $C = \sum_{l=1}^L |\alpha_l|/(1-|r_l|)$ and $\zeta = \max_{1 \leq l \leq L} |r_l| < 1$. Hence:
\ben
\mathrm{avg}(n) = \frac{1}{n} \sum_{m'=0}^{n-1} \mathrm{err}(m') \leq \frac{1}{n} \sum_{m'=0}^{n-1} C \zeta^{m'} \leq \frac{C}{n(1-\zeta)}.
\een
Therefore, Theorem \ref{thm:independence} concludes:
\ben
\mathrm{cor}\left( Y(k),Y(l) \right) = O(1/n), ~~~~~ k \neq l,
\een
where $Y(.)$ represents the $n$-point DFT of the AR process $X(.)$.

In the sequel, we assume that the time series $\bX$ is multivariate Gaussian, i.e., $X^{(1)},\ldots,X^{(p)}$ are jointly Gaussian processes.  It follows that the DFT components $Y^{(i)}(k)$ are jointly (complex) Gaussian as linear functionals of $\bX$.  Theorem \ref{thm:independence} then immediately implies asymptotic independence of DFT components through a well-known property of jointly Gaussian random variables. 

\begin{coro}\label{cor:independence}
Assume that the time series $\bX$ is multivariate Gaussian. Under the absolute summability conditions in Theorem \ref{thm:independence}, the DFT components $Y^{(i)}(k)$ and $Y^{(j)}(l)$ are asymptotically independent for $k \neq l$ and $n \to \infty$. 
\end{coro}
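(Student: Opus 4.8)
The plan is to combine the asymptotic uncorrelatedness already furnished by Theorem~\ref{thm:independence} with the elementary fact that jointly Gaussian variables are independent precisely when they are uncorrelated. Since $\bX$ is multivariate Gaussian and each $Y^{(i)}(k)=\sum_{m'=0}^{n-1}\omega^{km'}X^{(i)}(m')/\sqrt n$ is a fixed complex-linear combination of the $X^{(i)}(m')$, the real random vector formed by all the parts $\Re Y^{(i)}(k),\Im Y^{(i)}(k)$ is jointly (real) Gaussian. For a jointly Gaussian real vector, two subvectors are independent if and only if their cross-covariance block vanishes. Hence it suffices to show that the $2\times 2$ cross-covariance block between $(\Re Y^{(i)}(k),\Im Y^{(i)}(k))$ and $(\Re Y^{(j)}(l),\Im Y^{(j)}(l))$ tends to the zero matrix as $n\to\infty$, after which asymptotic independence follows because the limiting joint Gaussian law factorizes.

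The four real cross-covariances are encoded by two complex quantities: the Hermitian covariance $\mathbb E[Y^{(i)}(k)\overline{Y^{(j)}(l)}]$ and the pseudo-covariance $\mathbb E[Y^{(i)}(k)Y^{(j)}(l)]$. Writing out real and imaginary parts shows that all four entries of the block vanish asymptotically as soon as both of these complex quantities do. The Hermitian covariance is exactly $\cov(Y^{(i)}(k),Y^{(j)}(l))$, which Theorem~\ref{thm:independence} already drives to $0$ at rate $O(\max\{1/n,\mathrm{avg}^{(i,j)}(n)\})$ for $k\neq l$. For the pseudo-covariance I would rerun the representation \eqref{CovExpr}--\eqref{eq:Ycov_bound}, but omitting the conjugation on the second functional, which is legitimate because $X^{(j)}$ is real-valued so $X^{(j)}(n')^*=X^{(j)}(n')$. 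This replaces $G_l(m')^*$ in \eqref{firsttrem} by the spectrum of the non-conjugated sequence, so that the circulant (leading) term is now supported where the DFT of $\omega^{km'}/\sqrt n$ meets that of $\omega^{lm'}/\sqrt n$, namely where $k+l\equiv 0\pmod n$ rather than where $k=l$; the residual term involving $\bE^{(i,j)}$ is bounded verbatim by the right-hand side of \eqref{eq:Ycov_bound} and therefore also vanishes.

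Putting these together, for indices with $k\neq l$ and $k+l\not\equiv 0\pmod n$ both the Hermitian and the pseudo cross-covariances are $O(\max\{1/n,\mathrm{avg}^{(i,j)}(n)\})$, the entire cross-covariance block vanishes, and asymptotic independence follows from joint Gaussianity. The step I expect to be the genuine obstacle is precisely the excluded pairing $k+l\equiv 0\pmod n$: for a real time series the conjugate symmetry $Y^{(j)}(n-k)=\overline{Y^{(j)}(k)}$ makes $Y^{(i)}(k)$ and $Y^{(j)}(n-k)$ carry the same spatial correlation as $Y^{(i)}(k)$ and $Y^{(j)}(k)$, so these particular components remain genuinely dependent and the pseudo-covariance does not vanish. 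The clean resolution, which I would state explicitly, is that for a real-valued process one only retains the non-redundant frequencies $0\le k\le \lfloor n/2\rfloor$; within this range $k\neq l$ already forces $k+l\not\equiv 0\pmod n$ (the only potential solution being $k=l=n/2$), so the exceptional pairing never arises and the corollary holds as stated. The remaining bookkeeping---that the residual $\bE^{(i,j)}$ bound is uniform and that the normalizing variances converge to the nonzero constants $\bC^{(i,i)}(k),\bC^{(j,j)}(l)$ as in \eqref{eq:Var_i_conv}--\eqref{eq:Var_j_conv}---is routine.
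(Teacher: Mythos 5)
Your proposal follows the same basic route the paper intends---joint Gaussianity of the DFT components plus the asymptotic uncorrelatedness of Theorem~\ref{thm:independence}---but it is substantially more careful, and in fact it exposes a real gap in the paper's one-line justification. The paper asserts that independence follows ``immediately'' from a well-known property of jointly Gaussian variables, which is true for real Gaussians but not automatic for complex ones: Theorem~\ref{thm:independence} only controls the Hermitian covariance $\mathbb E[Y^{(i)}(k)\overline{Y^{(j)}(l)}]$, whereas independence of the underlying real Gaussian vectors requires the full $2\times 2$ real cross-covariance block to vanish, i.e.\ the pseudo-covariance $\mathbb E[Y^{(i)}(k)Y^{(j)}(l)]$ must vanish as well. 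Your rerun of \eqref{CovExpr}--\eqref{eq:Ycov_bound} without the conjugation is the right computation: the residual term involving $\bE^{(i,j)}$ is bounded exactly as before since $|g_l(n')|=1/\sqrt n$ either way, while the circulant leading term now collapses onto the set $k+l\equiv 0\pmod n$ rather than $k=l$. Your diagnosis of the exceptional pairing is also correct and important: for a real-valued process $Y^{(j)}(n-k)=\overline{Y^{(j)}(k)}$, so $Y^{(i)}(k)$ and $Y^{(j)}(n-k)$ carry the same dependence as $Y^{(i)}(k)$ and $Y^{(j)}(k)$ and are genuinely \emph{not} asymptotically independent; the corollary as literally stated (all $k\neq l$) fails for these conjugate-symmetric pairs, and your restriction to the non-redundant frequencies $0\le k\le\lfloor n/2\rfloor$ is the correct repair. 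In short, what your approach buys is a proof that actually closes the complex-Gaussian propriety issue the paper glosses over, at the cost of a mild (and necessary) restriction on the frequency indices; the paper's version buys brevity but leaves both the pseudo-covariance step and the exceptional pairing unaddressed.
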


Corollary \ref{cor:independence} implies that for large $n$, correlation analysis of the time series $\bX$ can be done independently on each frequency in the spectral domain.  This reduces the problem of screening for hub time series to screening for hub variables among the $p$ DFT components at a given frequency.  A procedure for the latter problem and a corresponding theory are described next.


\section{Complex-Valued Correlation Hub Screening}
\label{sec:corrScreen}

This section discusses complex-valued correlation hub screening, a generalization of real-valued correlation screening in \cite{hero2011large,hero2012hub}, for identifying highly correlated components of a complex-valued random vector from its sample values.  The method is applied to multivariate time series in Section \ref{sec:MultipleInf} to discover correlation hubs among the spectral components at each frequency.  Sections \ref{subsec:corrScreenModel} and \ref{subsec:corrScreenProc} describe the underlying statistical model and the screening procedure.  Sections \ref{subsec:corrScreenUScore} and \ref{Props} provide background on the U-score representation of correlation matrices and associated definitions and properties.  Section \ref{sec:Theoretical} contains the main theoretical result characterizing the number of hub discoveries in the high-dimensional regime, while Section \ref{subsec:corrScreenPhase} elaborates on the phenomenon of phase transitions in the number of discoveries.

\subsection{Statistical Model}
\label{subsec:corrScreenModel}

We use the generic notation 
$\bZ=[Z_1,Z_2,\cdots,Z_p]^T$ in this section to refer to a complex-valued random vector.  The mean of $\bZ$ is denoted as $\boldsymbol \mu$ and its $p \times p$ non-singular covariance matrix is denoted as $\mathbf \Sigma$. 
We assume
that the vector $\bZ$ follows a complex elliptically contoured distribution with 
pdf $f_{\bZ}(\bz)=
g\left((\bz-\boldsymbol \mu)^H {\mathbf \Sigma}^{-1} (\bz-\boldsymbol
\mu)\right)$, in which $g: \mathbb{R}^{\geq 0} \rightarrow \mathbb{R}^{>0}$ is an integrable and strictly decreasing function \cite{micheas2006complex}.  This assumption generalizes the Gaussian assumption made in Section \ref{GaussianWSS} as the Gaussian distribution is one example of an elliptically contoured distribution. 

In correlation hub screening, the quantities of interest are the correlation matrix and partial correlation matrix associated with $\bZ$. 
These are defined as $\mathbf \Gamma = \bD_{\mathbf{\Sigma}}^{-\frac{1}{2}} \mathbf{\Sigma} \bD_{\mathbf{\Sigma}}^{-\frac{1}{2}}$
and $\mathbf \Omega = \bD_{\mathbf{\Sigma^{-1}}}^{-\frac{1}{2}} \mathbf{\Sigma^{-1}} \bD_{\mathbf{\Sigma^{-1}}}^{-\frac{1}{2}}$, respectively.  Note that $\mathbf\Gamma$ and $\mathbf\Omega$ are normalized matrices with unit diagonals.

\subsection{Screening Procedure}
\label{subsec:corrScreenProc}

The goal of correlation hub screening is to identify highly correlated components of the random vector $\bZ$ from its sample realizations.
Assume that $m$ samples $\bz_{1}, \ldots, \bz_{m} \in \mathbb{R}^{p}$ of 
$\bZ$ are available.  To simplify the development of the theory, the samples are assumed to be independent and identically distributed (i.i.d.) although the theory also applies to dependent samples. 

We compute sample correlation and partial correlation matrices from the samples $\bz_{1}, \ldots, \bz_{m}$ as surrogates for the unknown population correlation matrices $\mathbf\Gamma$ and $\mathbf\Omega$ in Section \ref{subsec:corrScreenModel}.  First define the $p \times p$ sample covariance matrix $\bS$ 
as $ \bS=\frac{1}{m-1} \sum_{i=1}^m
(\bz_{i}-\ol{\bz})(\bz_{i}-\ol{\bz})^H, \label{sampcov} $
where 
$\ol{\bz}$ is the sample mean, the average of $\bz_{1}, \ldots, \bz_{m}$. 
The 
sample correlation and sample partial correlation matrices are then defined as $\bR = \bD_{\bS}^{-\frac{1}{2}} \bS \bD_{\bS}^{-\frac{1}{2}}$ and $\bP = \bD_{\bR^{\dagger}}^{-\frac{1}{2}} \bR^{\dagger} \bD_{\bR^{\dagger}}^{-\frac{1}{2}}$, respectively, where $\bR^{\dagger}$ is the Moore-Penrose pseudo-inverse of $\bR$.

Correlation hubs are screened by applying thresholds to the sample (partial) correlation matrix. 
A 
variable $Z_i$ is declared a hub screening discovery at degree level $\delta \in \{1,2,\ldots\}$ and threshold level $\rho \in [0,1]$ if
\ben
|\{j: j \neq i, |{\psi}_{ij}|\geq \rho \}| \geq \delta,
\een
where $\mathbf{\Psi} = \bR$ for 
correlation screening and $\mathbf{\Psi} = \bP$ for 
partial correlation screening. We denote by $N_{\delta,\rho} \in \{0,\ldots,p\}$ the total number of hub screening discoveries at levels $\delta,\rho$.

Correlation hub screening can also be interpreted in terms of the \emph{(partial) correlation graph} $\mathcal G_{\rho}({\mathbf \Psi})$, depicted in Fig.~\ref{fig:bipartite_graph} and defined as follows. The vertices of $\mathcal G_{\rho}({\mathbf \Psi})$ are $v_1,\cdots,v_p$ which correspond to $Z_1,\cdots,Z_p$, respectively. For $1 \leq i,j \leq p$, $v_i$ and $v_j$ are connected by an edge in $\mathcal G_{\rho}({\mathbf \Psi})$ if the magnitude of the sample (partial) correlation coefficient between $Z_i$ and $Z_j$ is at least $\rho$. 
A vertex of $\mathcal G_{\rho}({\mathbf \Psi})$ is called a $\delta$-hub if its degree, the number of incident edges, is at least $\delta$. 
Then the number of discoveries $N_{\delta,\rho}$ defined earlier 
is the number of $\delta$-hubs in the graph $\mathcal G_{\rho}({\mathbf \Psi})$.

\begin{figure}
\begin{center}

\begin{tikzpicture}

\node at (1.4,0.5) [matched] {};
\node at (0.75,1.28) [matched] {};
\node at (-0.75,1.28) [matched] {};
\node at (-1.4,0.5) [matched] {};
\node at (-1.1,-1) [matched] {};
\node at (0.5,-1.4) [matched] {};

\node at (1.75,0.5) {$v_3$};
\node at (0.75,1.7) {$v_2$};
\node at (-0.75,1.7) {$v_1$};
\node at (-1.75,0.5) {$v_p$};
\node at (-1.1,-1.4) {$v_j$};
\node at (0.5,-1.8) {$v_i$};
\draw (1.4,0.5) -- (0.8,0.4);
\draw (1.4,0.5) -- (0.85,0.15);
\draw (1.4,0.5) -- (0.9,-0.1);
\draw (-1.4,0.5) -- (-0.8,0.4);
\draw (-1.4,0.5) -- (-0.85,0.15);
\draw (-1.4,0.5) -- (-0.9,-0.1);
\draw (0.75,1.28) -- (0.65,0.50);
\draw (0.75,1.28) -- (0.4,0.75);
\draw (0.75,1.28) -- (0.15,0.95);
\draw (-0.75,1.28) -- (-0.65,0.50);
\draw (-0.75,1.28) -- (-0.4,0.75);
\draw (-0.75,1.28) -- (-0.15,0.95);
\draw (0.5,-1.4) -- (0.65,-0.6);
\draw (0.5,-1.4) -- (0.40,-0.65);
\draw (0.5,-1.4) -- (0.15,-0.75);
\draw (-1.1,-1) -- (-1,-0.30);
\draw (-1.1,-1) -- (-.75,-0.55);
\draw (-1.1,-1) -- (-0.5,-0.80);
\draw [thick] (0.5,-1.4) -- (-1.1,-1);
\draw [dash pattern= on 1pt off 10pt on 1pt off 10pt](-1.4,0.5) arc (160:221:1.5 cm);/
\draw [dash pattern=on 1pt off 12pt on 1pt off 10pt] (-1.1,-1) arc (221:292:1.5 cm);/
\draw [dash pattern=on 1pt off 14pt on 1pt off 14pt](0.5,-1.4) arc (292:380:1.5 cm);/
\end{tikzpicture}
\caption{Complex-valued (partial) correlation hub screening thresholds the sample correlation or partial correlation matrix, denoted generically by the matrix $\mathbf \Psi$, to find variables $Z_i$ that are highly correlated with other variables. This is equivalent to finding hubs in a graph ${\mathcal G}_{\rho}({\mathbf \Psi})$ with $p$ vertices $v_1,\cdots,v_p$. For $1 \leq i,j \leq p,$ $v_i$ is connected to $v_j$ in ${\mathcal G}_{\rho}({\mathbf \Psi})$ if
 $|{\psi}_{ij}| \geq \rho$.} \label{fig:bipartite_graph}
\end{center}
\end{figure}
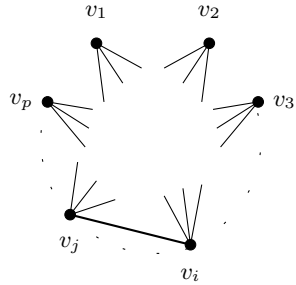

\subsection{U-score Representation of Correlation Matrices}
\label{subsec:corrScreenUScore}

Our theory for complex-valued correlation screening is based on the U-score representation of the sample correlation and partial correlation matrices. Similarly to the real case \cite{hero2012hub}, it can be shown  that there exists an $(m-1) \times p$ complex-valued matrix $\mathbb U_{\bR}$ with unit-norm columns $\bu_{\bR}^{(i)} \in \mathbb{C}^{m-1}$ such that the following representation holds:
\begin{equation} \label{eq:UR}
\bR= \mathbb U_{\bR}^H \mathbb U_{\bR}.
\end{equation}
Similar to Lemma 1 in \cite{hero2012hub} it is straightforward to show that:
\ben
\bR^{\dagger} = \mathbb U_{\bR}^H (\mathbb U_{\bR} \mathbb U_{\bR}^H)^{-2} \mathbb U_{\bR}.
\een
Hence by defining $\mathbb U_{\bP} = (\mathbb U_{\bR} \mathbb U_{\bR}^H)^{-1} \mathbb U_{\bR} \bD_{\mathbb U_{\bR}^H (\mathbb U_{\bR} \mathbb U_{\bR}^H)^{-2} \mathbb U_{\bR}} ^{-\frac{1}{2}}$
we have the representation:
\begin{equation} \label{eq:UP}
\bP = \mathbb U_{\bP}^H \mathbb U_{\bP},
\end{equation}
where the $(m-1) \times p$ matrix $\mathbb U_{\bP}$ has unit-norm columns $\bu_{\bP}^{(i)} \in \mathbb{C}^{m-1}$.

\subsection{Properties of U-scores}
\label{Props}

The U-score factorizations in \eqref{eq:UR} and \eqref{eq:UP} show that sample (partial) correlation matrices can be represented in terms of unit vectors in $\mathbb C^{m-1}$.  This subsection presents  definitions and properties related to 
U-scores that will be used in Section \ref{sec:Theoretical}.


We denote the unit spheres in $\mathbb R^{m-1}$ and $\mathbb C^{m-1}$ as $S_{m-1}$ and $T_{m-1}$, respectively. The surface areas of $S_{m-1}$ and $T_{m-1}$ are denoted as $a_{m-1}$ and $b_{m-1}$ respectively.
Define the interleaving function $h: \mathbb R^{2m-2} \rightarrow \mathbb C^{m-1}$ as below:
\be
&&h([x_1,x_2,\cdots,x_{2m-2}]^T) = \nonumber \\
&&[x_1+x_2\sqrt{-1},x_3+x_4\sqrt{-1},\cdots,x_{2m-3}+x_{2m-2}\sqrt{-1}]^T \nonumber.
\ee
Note that $h(.)$ is a one-to-one and onto function and it maps $S_{2m-2}$ to $T_{m-1}$.

For a fixed vector $\bu \in T_{m-1}$ and a threshold $0 \leq \rho \leq 1$ define the spherical cap in $T_{m-1}$:
\ben
\label{eq:arho}
A_{\rho}(\bu) = \{\bfy : \bfy \in T_{m-1}, |\bfy^H \bu| \geq \rho \}.
\een
Also define $P_0$ as the probability that a random point $\bY$ that is uniformly distributed on $T_{m-1}$ falls into $A_{\rho}(\bu)$. 
Below we give a simple expression for $P_0$ as a function of $\rho$ and $m$.

\begin{lem}
\label{Lemma:P0}
Let $\bY$ 
be an $(m-1)$-dimensional complex-valued random vector that is uniformly distributed over $T_{m-1}$. We have $P_0 = \mathbb{P} \left( \bY \in A_\rho(\bu) \right) = (1-\rho^2)^{m-2}$.
\end{lem}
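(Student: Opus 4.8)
The plan is to use the unitary invariance of the uniform distribution on $T_{m-1}$ to reduce the problem to the distribution of the squared modulus of a single coordinate, which I will then identify as a Beta random variable. First I would observe that the uniform law on $T_{m-1}$ is invariant under left multiplication by any unitary matrix $\mathbf{U} \in \mathbb{C}^{(m-1)\times(m-1)}$, since the interleaving map $h$ carries this law to the uniform law on $S_{2m-2}$ and a unitary map acts as a rotation on $\mathbb{R}^{2m-2}$. Choosing $\mathbf{U}$ with $\mathbf{U}\mathbf{e}_1 = \bu$ and replacing $\bY$ by $\mathbf{U}\bY$ gives $(\mathbf{U}\bY)^H \bu = \bY^H \mathbf{U}^H \bu = \bY^H \mathbf{e}_1 = \overline{Y_1}$. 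Hence $|\bY^H \bu|$ has the same distribution as $|Y_1|$, and it suffices to compute $P_0 = \mathbb{P}(|Y_1|^2 \geq \rho^2)$, where $\mathbf{e}_1$ is the first standard basis vector.

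Second, I would realize $\bY$ as a normalized complex Gaussian vector. Let $\mathbf{G} = [G_1,\dots,G_{m-1}]^T$ have entries with real and imaginary parts i.i.d.\ $N(0,1)$; equivalently, via $h$, take $2m-2$ i.i.d.\ real standard Gaussians and normalize, which produces the uniform distribution on $S_{2m-2}$ and hence, through $h$, on $T_{m-1}$. Then $\bY = \mathbf{G}/\|\mathbf{G}\|$ in distribution, so that
\[
|Y_1|^2 = \frac{|G_1|^2}{\sum_{j=1}^{m-1}|G_j|^2}.
\]

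Third --- and this is the only step where the complex structure genuinely enters --- I would count degrees of freedom. The numerator $|G_1|^2 = \Re(G_1)^2 + \Im(G_1)^2$ is a sum of two squared standard Gaussians, hence $\chi^2_2$, while $\sum_{j=2}^{m-1}|G_j|^2$ is an independent $\chi^2_{2m-4}$ variable accounting for the remaining $2m-4$ real coordinates. The standard fact that $A/(A+B) \sim \mathrm{Beta}(a/2,b/2)$ for independent $A \sim \chi^2_a$ and $B \sim \chi^2_b$ then yields $|Y_1|^2 \sim \mathrm{Beta}(1,m-2)$, whose density is $(m-2)(1-t)^{m-3}$ on $[0,1]$.

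Finally, a direct integration completes the proof:
\[
P_0 = \int_{\rho^2}^{1} (m-2)(1-t)^{m-3}\,dt = (1-\rho^2)^{m-2}.
\]
The main obstacle is the bookkeeping in the third step: one must correctly track that each complex coordinate contributes \emph{two} real degrees of freedom, so that the singled-out coordinate accounts for $2$ and the rest for $2m-4$, giving the shape parameters $(1,m-2)$ rather than those one would naively write in the real case. An alternative route would compute the surface area of the cap $A_\rho(\mathbf{e}_1)$ directly by slicing $T_{m-1}$ along $|y_1| = r$ and integrating the areas of the resulting sub-spheres of radius $\sqrt{1-r^2}$ in $\mathbb{C}^{m-2}$; this is equivalent but more cumbersome, and the Gaussian representation avoids the explicit surface-measure computation.
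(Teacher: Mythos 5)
Your proof is correct and follows essentially the same route as the paper: reduce to $\bu = \mathbf{e}_1$, represent the uniform point on $T_{m-1}$ as a normalized complex Gaussian vector, and exploit the $\chi^2_2$ versus $\chi^2_{2(m-2)}$ decomposition of the squared moduli. The only cosmetic difference is that you package the final computation as the $\mathrm{Beta}(1,m-2)$ law of $|Y_1|^2$, whereas the paper evaluates the corresponding double integral over the two chi-squared densities directly; both yield $(1-\rho^2)^{m-2}$.
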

\begin{proof}
Without loss of generality we assume $\bu = [1, 0, \cdots, 0]^T$. We have:
\ben
P_0 = \mathbb{P} ( |Y_1| \geq \rho ) = \mathbb{P} ( \Re(Y_{1})^2 + \Im(Y_{1})^2 \geq \rho^2 ).
\een
Since $\bY$ is uniform on $T_{m-1}$, we can write $\bY = \bX / \| \bX \|_2$, in which $\bX$ is complex-valued random vector whose entries are i.i.d. complex-valued Gaussian variables with mean $0$ and variance $1$. Thus:
\ben
P_0 &=& \mathbb{P} \left( \left(\Re(X_{1})^2 + \Im(X_{1}^2) \right) /\| \bX\|_2^2 \geq \rho^2 \right) \nonumber \\
&=& \mathbb{P} \left( (1-\rho^2) \left(\Re(X_{1})^2 + \Im(X_{1})^2 \right) \geq \rho^2 \sum_{k=2}^{m-1} \Re(X_{k})^2 + \Im(X_{k})^2  \right).
\een
Define $V_1 = \Re(X_{1})^2 + \Im(X_{1})^2$ and $V_2 = \sum_{k=2}^{m-1} \Re(X_{k})^2 + \Im(X_{k})^2$. $V_1$ and $V_2$ are independent and have chi-squared distributions with $2$ and $2(m-2)$ degrees of freedom, respectively \cite{simon2007probability}. Therefore,
\ben
P_0 &=& \int_{0}^{\infty} \int_{\rho^2 v_2 /(1-\rho^2)}^{\infty} \chi_2^2(v_1) \chi_{2(m-2)}^2(v_2) dv_1 dv_2 \nonumber \\&=& \int_0^{\infty} \chi_{2(m-2)}^2(v_2) \int_{\rho^2 v_2 /(1-\rho^2)}^{\infty} \frac{1}{2}e^{-v_1/2} dv_1 dv_2  \nonumber \\ &=&
\int_0^{\infty} \frac{1}{2^{m-2}\Gamma(m-2)}v_2^{m-3} e^{-v_2/2} e^{-\frac{\rho^2}{2(1-\rho^2)}v_2} dv_2 \nonumber  \\&=& \frac{1}{\Gamma(m-2)} (1-\rho^2)^{m-2} \int_0^{\infty} x^{m-3} e^{-x} dx \nonumber  \\&=& \frac{1}{\Gamma(m-2)} (1-\rho^2)^{m-2} \Gamma(m-2) = (1-\rho^2)^{m-2},
\een
in which we have made a change of variable $x=\frac{v_2}{2(1-\rho^2)}$.
\qed
\end{proof}

Under the assumption that the joint pdf of $\bZ$ exists, the $p$ columns of the U-score matrix have joint pdf $f_{\bU_1,\ldots,\bU_{p}}(\bu_1,\ldots,\bu_{p})$ on
$T_{m-1}^{p}={\large \times}_{i=1}^{p} T_{m-1}$. The following $(\delta+1)$-fold average of the joint pdf will play a significant role in Section \ref{sec:Theoretical}. This $(\delta+1)$-fold average is defined as:
\ben &&\ol{f_{\bU_{\ast1}, \ldots, \bU_{\ast{\delta+1}}}}
(\bu_1, \ldots, \bu_{\delta+1})  =\frac{1}{(2 \pi)^{\delta+1} p \binom{p-1}{\delta}} \times \label{eq:avgfubivpersistent}
\label{eq:olfdef1}\\
&&~~~~ \sum_{1 \leq i_1 < \cdots < i_{\delta} \leq p, i_{\delta+1} \notin \{ i_1,\cdots,i_{\delta} \} } \int_{0}^{2\pi} \nonumber \int_{0}^{2\pi}  \cdots \int_{0}^{2\pi} \nonumber \\ &&f_{\bU_{i_1},\ldots,
\bU_{i_{\delta}},\bU_{i_{\delta+1}}} (e^{\sqrt{-1}\theta_1} \bu_1, \ldots, e^{\sqrt{-1}\theta_{\delta}}
\bu_{\delta}, e^{\sqrt{-1}\theta} \bu_{\delta+1}) ~ d\theta_1 \cdots d\theta_{\delta} ~ d\theta.
\nonumber
\een
Also for a joint pdf $f_{\bU_1,\ldots,\bU_{\delta+1}}(\bu_1,\ldots,\bu_{\delta+1})$ on $T_{m-1}^{\delta+1}$ define \be &&J(f_{\bU_1,\ldots,\bU_{\delta+1}}) = \nonumber a_{2m-2}^{\delta} \int_{S_{2m-2}} f_{\bU_1,\ldots,\bU_{\delta+1}}(h(\bu), \ldots,h(\bu)) d\bu.
\label{eq:Jdef}
\ee
Note that $J(f_{\bU_1,\ldots,\bU_{\delta+1}})$ is proportional to the integral of $f_{\bU_1,\ldots,\bU_{\delta+1}}$ over the manifold $\bu_1=\ldots=\bu_{\delta+1}$. The quantity $J(\ol{f_{\bU_{\ast1}, \ldots, \bU_{\ast(\delta+1)}}})$ 
is key in determining the asymptotic average number of hubs in a complex-valued correlation network. This will be described in more detail in Sec. \ref{sec:Theoretical}.

Let $\vec{i}=(i_0,i_1,\ldots,i_\delta)$ be a set of distinct indices, i.e., 
$1 \leq i_0 \leq p, 1 \leq i_1 < \ldots < i_\delta \leq p$ and $i_1,\ldots,i_\delta \neq i_0$. For a U-score matrix $\mU$ define the
dependency coefficient between the columns $\bU_{\vec{i}}=\{\bU_{i_0},\bU_{i_1}, \ldots,
\bU_{i_{\delta}}\}$ and
their complementary $k$-NN ($k$-nearest neighbor) set $A_k(\vec{i})$ defined in \eqref{eq:CompKNN} 
and Fig. \ref{fig:KNNGraph1} as
\ben
\Delta_{p,m,k,\delta}(\vec{i})= \left \|(f_{\bU_{\vec{i}}|\bU_{A_k(\vec{i})}}-f_{\bU_{\vec{i}}})/f_{\bU_{\vec{i}}}
\right\|_{\infty},
\label{eq:deltapij}
\een
where $\lVert \cdot \rVert_{\infty}$ denotes the supremum norm. The average of these  coefficients is defined as:
\be
\|\Delta_{p,m,k,\delta}\|_1= \frac{1}{p{p-1 \choose \delta} }
\sum_{i_0=1}^{p}\sum_{\stackrel{i_1 ,\ldots, i_\delta \neq i_0} {1 \leq i_1< \ldots < i_\delta \leq p}}
\Delta_{p,m,k,\delta}(\vec{i}).
\label{eq:Deltapdefavg}
\ee%

\subsection{Number of Hub Discoveries in the High-Dimensional Limit}
\label{sec:Theoretical}

We now present the main theoretical result on complex-valued correlation screening. 
The following theorem gives asymptotic expressions for the mean number of $\delta$-hubs and the probability of discovery of at least one $\delta$-hub in the graph $\mathcal G_{\rho}({\mathbf \Psi})$. It also gives bounds on the rates of convergence to these approximations as the dimension $p$ increases and $\rho \rightarrow 1$. We use $\mathbb U = [\bU_{1},\cdots,\bU_p]$ as a generic notation for the U-score representation of the sample (partial) correlation matrix. The asymptotic expression for the mean $\mathbb E[N_{\delta,\rho}]$ is denoted by $\Lambda$ and is given by:
\be
\Lambda = p{p-1 \choose \delta} P_0^{\delta} J(\ol{f_{\bU_{\ast1}, \ldots, \bU_{\ast(\delta+1)}}}).
\label{eq:Lambdadef}
\ee
Define $\eta_{p,\delta}$ as:
\be 
\eta_{p,\delta} = p^{1/\delta}(p-1) P_0 = p^{1/\delta}(p-1)(1-\rho^2)^{(m-2)}, \label{eq:etadef}
\ee
where the last equation is due to Lemma \ref{Lemma:P0}. The parameter $k$ below represents an upper bound on the true hub degree, i.e. the number of non-zero entries in any row of the population covariance matrix $\mathbf \Sigma$. Also let $\varphi(\delta)$ be the function that takes values $\varphi(\delta)=2$ for $\delta=1$ and $\varphi(\delta)=1$ for $\delta > 1$.
\begin{thm}
Let $\mU=[\bU_1, \ldots, \bU_p]$ be a $(m-1)\times p$ random matrix
with $\bU_i \in T_{m-1}$ where $m>2$. Let $\delta \geq 1$ be a fixed integer. Assume the joint pdf of any subset of the $\bU_i$'s is bounded and differentiable. Then, with $\Lambda$ defined in (\ref{eq:Lambdadef}),
\be
\left|\mathbb E[N_{\delta,\rho}] - \Lambda\right| \leq
O\left( \eta_{p,\delta}^{\delta}
\max\left\{\eta_{p,\delta} p^{-1/\delta} ,(1-\rho)^{1/2}\right\}\right).
\label{eq:ENlim}
\ee
Furthermore, let $N^*_{\delta,\rho}$
be a Poisson distributed random variable with rate
$\mathbb E[N^*_{\delta,\rho}]=\Lambda/\varphi(\delta)$.
If $(p-1)P_0\leq 1$, then
\be
&&\left|\mathbb{P}(N_{\delta,\rho}>0)-\mathbb{P}(N^*_{\delta,\rho} >0)\right|\leq \nonumber \\ &&\left\{\begin{array}{cc}
O\left(\eta_{p,\delta}^{\delta}
\max\left\{\eta_{p,\delta}^{\delta}
\left(k/p\right)^{\delta+1},
Q_{p,k,\delta}, \|\Delta_{p,m,k,\delta}\|_1,p^{-1/\delta},(1-\rho)^{1/2}\right\}\right), & \delta>1 \\
O\left(\eta_{p,1}
\max\left\{\eta_{p,1}
\left(k/p\right)^{2}, \|\Delta_{p,m,k,1}\|_1, p^{-1},(1-\rho)^{1/2}\right\}\right), & \delta=1
\end{array}\right. ,
\nonumber
\\
\label{eq:Pvlim} \ee with $Q_{p,k,\delta}=\eta_{p,\delta}
\left(k/p^{1/\delta}\right)^{\delta+1}$ and
$\|\Delta_{p,m,k,\delta}\|_1$ defined in \eqref{eq:Deltapdefavg}.
\label{prop:parcor}
\end{thm}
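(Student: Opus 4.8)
The plan is to adapt the argument for real-valued correlation screening to the complex sphere $T_{m-1}$, the crucial new ingredient being the phase invariance of the screening statistic. First I would write $N_{\delta,\rho}=\sum_{i=1}^{p}\phi_i$, where $\phi_i$ is the indicator that vertex $v_i$ is a $\delta$-hub, i.e.\ that at least $\delta$ of the caps $A_\rho(\bU_i)$ contain one of the remaining $\bU_j$. Since $|\bU_i^H\bU_j|$ is invariant under $\bU_j\mapsto e^{\sqrt{-1}\theta}\bU_j$, every cap probability can be expressed through the phase-averaged density $\overline{f}$ of \eqref{eq:olfdef1}; this invariance is the structural reason the limit is governed by $J(\overline{f_{\bU_{\ast1},\ldots,\bU_{\ast(\delta+1)}}})$ and is what forces the $\int_0^{2\pi}$ phase integrations into the definition of $\overline{f}$.

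For the mean I would first expand each $\phi_i$ by inclusion--exclusion over the $\delta$-subsets of its potential neighbours. Because $P_0=(1-\rho^2)^{m-2}$ is small as $\rho\to1$ by Lemma \ref{Lemma:P0}, the event of having strictly more than $\delta$ neighbours in a cap is of higher order in $P_0$, so the leading term is $\mathbb{E}[N_{\delta,\rho}]\approx p\binom{p-1}{\delta}\,\mathbb{P}(\bU_{i_1},\ldots,\bU_{i_\delta}\in A_\rho(\bU_{i_0}))$ for a fixed tuple $\vec{i}$. I would then evaluate this joint cap probability asymptotically: conditioned on $\bU_{i_0}$, each neighbour lands in a cap of measure $P_0$, producing the factor $P_0^\delta$, while the residual density factor is extracted by a local change of variables that collapses the shrinking caps onto the diagonal manifold $\bu_1=\cdots=\bu_{\delta+1}$. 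Using the interleaving map $h$ and the normalisation $a_{2m-2}^\delta$ of \eqref{eq:Jdef}, the integral of the phase-averaged density over this manifold is exactly $J(\overline{f})$, which assembles into $\Lambda$ as in \eqref{eq:Lambdadef}. The error from replacing the density by its value on the manifold is controlled by boundedness and differentiability of the joint pdf and scales like the cap radius $\sim(1-\rho)^{1/2}$; combining this with the higher-order $P_0$ correction of relative size $\eta_{p,\delta}p^{-1/\delta}=(p-1)P_0$ yields \eqref{eq:ENlim}.

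For the Poisson approximation I would apply the Chen--Stein method to $\{\phi_i\}_{i=1}^{p}$, taking as dependency neighbourhood of each block $\bU_{\vec{i}}$ the complementary $k$-nearest-neighbour set $A_k(\vec{i})$, so that the residual long-range dependence is quantified by the coefficient $\Delta_{p,m,k,\delta}$ of \eqref{eq:deltapij}. The task is to bound the three standard Chen--Stein quantities: the first (sum of products of marginal hub probabilities within neighbourhoods) and the second (pairwise co-occurrence probabilities within neighbourhoods) produce the $Q_{p,k,\delta}$ and $(k/p)^{\delta+1}$ terms after inserting the mean estimate above, while the third (conditional-expectation remainder) is controlled by $\|\Delta_{p,m,k,\delta}\|_1$ from \eqref{eq:Deltapdefavg}. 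The Poisson rate is $\Lambda/\varphi(\delta)$ rather than $\Lambda$ because, when $\delta=1$, both endpoints of each rare edge become $1$-hubs simultaneously, so the positive events are governed by the number of edges, which is Poisson with rate $\Lambda/2$; for $\delta>1$ this pairing does not occur and $\varphi(\delta)=1$. The condition $(p-1)P_0\leq1$ keeps the rate bounded, and collecting the marginal error \eqref{eq:ENlim} with the three Chen--Stein terms gives the two cases of \eqref{eq:Pvlim}.

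The hard part will be the asymptotic evaluation of the joint cap probability and its concentration on the diagonal manifold $\bu_1=\cdots=\bu_{\delta+1}$ as $\rho\to1$. This requires a careful local parametrisation of $T_{m-1}^{\delta+1}$ near that manifold---a complex analogue of the coarea/tube computation used in the real case---together with uniform control of the first-order variation of the density (via differentiability) to certify that the error is $O((1-\rho)^{1/2})$ and not a worse power. Matching the surface-area constants and the $2\pi$ phase factors of \eqref{eq:olfdef1} to $J(\overline{f})$ exactly, and verifying that the sub-leading multi-neighbour events are absorbed into $\eta_{p,\delta}^{\delta}\max\{\eta_{p,\delta}p^{-1/\delta},(1-\rho)^{1/2}\}$, are the delicate bookkeeping steps on which the stated rates ultimately depend.
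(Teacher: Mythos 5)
Your proposal follows essentially the same route as the paper's proof: the decomposition $N_{\delta,\rho}=\sum_i\phi_i$, extraction of the leading $\binom{p-1}{\delta}P_0^\delta$ term with higher-order corrections of size $((p-1)P_0)^{\delta+1}$, a mean-value/localization argument collapsing the joint cap probability onto the diagonal manifold to produce $J(\overline{f})$ with error $O((1-\rho)^{1/2})$, and the Chen--Stein method with $k$-NN dependency neighbourhoods whose three terms give the $(k/p)^{\delta+1}$, $Q_{p,k,\delta}$, and $\|\Delta_{p,m,k,\delta}\|_1$ contributions. The only slip is terminological: the Chen--Stein dependency neighbourhood is the $k$-NN index set $\Beta_{\vec{i}}$ itself, while the complementary set $A_k(\vec{i})$ is what one conditions on in the third term $b_3$ --- your substance is nonetheless correct, including the explanation of the $\varphi(\delta)$ factor via edge double-counting when $\delta=1$.
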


\begin{proof}
The proof is similar to the proof of proposition 1 in \cite{hero2012hub}. First we prove \eqref{eq:ENlim}. Let $\phi_i=I(d_i\geq \delta)$ be the indicator of the event that $d_i\geq \delta$, in which $d_i$ represents the degree of the vertex $v_i$ in the graph ${\mathcal G}_{\rho}({\mathbf \Psi})$. We have  $N_{\delta,\rho} = \sum_{i=1}^p \phi_i$. 
With $\phi_{ij}$ being the indicator of the presence of an edge in ${\mathcal G}_{\rho}({\mathbf \Psi})$ between vertices $v_i$ and $v_j$ we have the relation:
\be \phi_i&=&\sum_{l=\delta}^{p-1}
\sum_{\vec{k}\in\breve{\calC}_i(p-1,l)} \prod_{j=1}^l \phi_{ik_j}
\prod_{q=l+1}^{p-1} (1-\phi_{i k_q}) \label{eq:phirep} \ee where
we have defined the index vector $\vec{k}=(k_1, \ldots, k_{p-1})$
and the set
$$\breve{\calC}_i(p-1,l)=$$ $$\{\vec{k}: k_1< \ldots< k_l, k_{l+1} < \ldots < k_{p-1}
\; k_j \in \{1, \ldots, p\}-\{i\}, k_j\neq k_{j'}\}.$$ The inner
summation in (\ref{eq:phirep}) simply sums over the set of
distinct indices not equal to $i$ that index all ${p-1 \choose l}$
different types of products of the form: $\prod_{j=1}^l \phi_{ik_j}
\prod_{q=l+1}^{p-1} (1-\phi_{i k_q})$. Subtracting $
\sum_{\vec{k}\in\breve{\calC}_i(p-1,\delta)}  \prod_{j=1}^\delta
\phi_{ik_j}$ from both sides of (\ref{eq:phirep})
\be &&\phi_i-\sum_{\vec{k} \in \breve{\calC}_i(p-1,\delta)}
\prod_{j=1}^\delta \phi_{ik_j} \nonumber \\ &&\hspace{0.2in}=
\sum_{l=\delta+1}^{p-1} \sum_{\vec{k}\in\breve{\calC}_i(p-1,l)}
\prod_{j=1}^l \phi_{ik_j}\prod_{q=l+1}^{p-1} (1-\phi_{i k_q}) \nonumber
\\
&&\hspace{0.3in}+\sum_{\vec{k}\in\breve{\calC}_i(p-1,l)}
\sum_{q=\delta+1}^{p-1}(-1)^{q-\delta} \nonumber \\ && \sum_{k'_{\delta+1} <
\ldots <k'_q, \{k'_{\delta+1},...,k'_q \} \subset \{
k_{\delta+1},...,k_{p-1}\} } \prod_{j=1}^l \phi_{ik_j}
\prod_{s=\delta+1}^q \phi_{ik'_s} \label{eq:phirep2} \ee in which we have used the expansion
$$\prod_{q=\delta+1}^{p-1} (1-\phi_{i k_q})=
1+ \sum_{q=\delta+1}^{p-1}(-1)^{q-\delta} \sum_{k'_{\delta+1} <
\ldots <k'_{q}, \{k'_{\delta+1},...,k'_q \} \subset \{
k_{\delta+1},...,k_{p-1}\}} \prod_{s=\delta+1}^q \phi_{ik'_s}.$$

The following simple asymptotic representation will be useful in
the sequel. For any $i_1, \ldots, i_k \in \{1, \ldots , p\}$, $i_1
\neq \cdots \neq i_k \neq i$, $k\in \{1, \ldots, p-1\}$,
\be
\mathbb E\left[\prod_{j=1}^k \phi_{i i_j} \right] &=&\int_{S_{2m-2}}
 \int_{h^{-1}(A_{\rho}(\bfv))}  \cdots \int_{h^{-1}(A_{\rho}(\bfv))}
 \nonumber \\ && f_{\bfU_{i_1},\ldots, \bfU_{i_k},\bfU_i}(h(\bfv_1),
\cdots,h(\bfv_k),h(\bfv)) ~ d\bfv_{1} \cdots d\bfv_{k} ~ d\bfv
\label{eq:thetamoms} \nonumber \\
&\leq& P_0^k a_{2m-2}^k M_{k|1} \label{eq:thetamomsineq}
\label{eq:old714}
\ee
where $P_0, A_{\rho}(\bu)$ and the function $h(.)$ are defined in Sec. \ref{Props}.  Moreover \ben M_{k|1} &=&\max_{i_1 \neq \cdots \neq
i_{k+1}}\left\| f_{\bfU_{i_1},\ldots,
\bfU_{i_k}|\bfU_{i_{k+1}}}\right\|_{\infty} \label{eq:Mk1def}.
\een  
The following simple generalization of \eqref{eq:old714} to arbitrary
product indices $\phi_{ij}$ will also be needed \be
\mathbb E\left[\prod_{l=1}^q \phi_{i_lj_l}\right] \leq  P_0^q a_{2m-2}^q
M_{|Q|}, \label{eq:thetamoms2} \ee where
$Q=$unique$(\{i_l,j_l\}_{l=1}^q)$ is the set of unique indices
among the distinct pairs $\{(i_l,j_l)\}_{l=1}^q$ and $M_{|Q|}$ is
a bound on the joint pdf of $\bU_{Q}$.

Define the random variable \ben \theta_i = {p-1 \choose \delta}^{-1}
\sum_{\vec{k} \in \breve{\calC}_i(p-1,\delta)} \prod_{j=1}^\delta
\phi_{ik_j}. \label{eq:thetadefnew} \een We show below that for
sufficiently large $p$ \be \left|\mathbb E[\phi_i]-{p-1 \choose \delta}
\mathbb E[\theta_i]\right| &\leq& \gamma_{p,\delta} ((p-1)
P_0)^{\delta+1}, \label{eq:proof1} \ee where
$\gamma_{p,\delta}=\max_{\delta+1\leq l<p} \{a_{2m-2}^l M_{l|1}\}
\left(e-\sum_{l=0}^\delta \frac{1}{l!}\right)
\left(1+(\delta!)^{-1}\right)$ and $M_{l|1}$ is a least upper
bound on any $l$-dimensional joint pdf of the variables
$\{\bU_i\}_{j\neq i}^p$ conditioned on $\bU_i$.

To show inequality (\ref{eq:proof1}) take expectations of
(\ref{eq:phirep2})
and apply the bound (\ref{eq:old714}) 
to obtain \be
&&\left|\mathbb E[\phi_i]-{p-1 \choose \delta} \mathbb E[\theta_i]\right| \leq \nonumber \\
&& \left|\sum_{l=\delta+1}^{p-1} {p-1 \choose
l}  P_0^l a_{2m-2}^lM_{l|1} +{p-1 \choose \delta}
\sum_{l=1}^{p-1-\delta} {p-1-\delta \choose l} P_0^{\delta+l}
a_{2m-2}^{\delta+l}
M_{\delta+l|1} \right| \nonumber \\
&& \leq A(1+(\delta!)^{-1}), \label{eq:proof2} \ee
where
$$A=\sum_{l=\delta+1}^{p-1} {p-1 \choose l} ((p-1)P_0)^l a_{2m-2}^lM_{l|1}.$$
The line (\ref{eq:proof2}) follows from the identity ${p-1-\delta
\choose l}{p-1 \choose \delta} ={p-1 \choose l+\delta} {l+\delta
\choose l}$ and a change of index in the second summation on the
previous line. Since $(p-1)P_0<1$
\ben |A|
&\leq& \max_{\delta+1\leq l<p} \{a_{2m-2}^l M_{l|1}\}
\sum_{l=\delta+1}^{p-1} {p-1 \choose l}
((p-1)P_0)^l \\
&\leq& \max_{\delta+1\leq l<p} \{a_{2m-2}^l
M_{l|1}\}\left(e-\sum_{l=0}^\delta \frac{1}{l!}\right)
((p-1)P_0)^{\delta+1}. \een

Application of the mean value theorem to the integral
representation (\ref{eq:thetamoms}) yields
\be \left|\mathbb E[\theta_i]- P_0^\delta J(\ol{f_{\bU_{\ast1-i}, \ldots,
\bU_{\ast\delta-i},\bU_i}}) \right| &\leq&
\tilde{\gamma}_{p,\delta} ((p-1)P_0)^\delta r, \label{eq:proof3}
 \ee
where 
\ben \ol{f_{\bU_{\ast1-i}, \ldots,
\bU_{\ast\delta-i},\bU_i}}
(\bu_1, && \ldots, \bu_{\delta+1})  = \\ \frac{1}{(2 \pi)^{\delta} \binom{p-1}{\delta}} && 
 \sum_{\stackrel{1 \leq i_1 < \cdots < i_{\delta} \leq p}{i \notin \{i_1,\cdots,i_{\delta} \} }} \nonumber \int_{0}^{2\pi}  \cdots \int_{0}^{2\pi} \nonumber \\ f_{\bU_{i_1},\ldots,
\bU_{i_{\delta}},\bU_{i}} (e^{\sqrt{-1}\theta_1} \bu_1, &&\ldots, e^{\sqrt{-1}\theta_{\delta}}
\bu_{\delta}, \bu_{\delta+1}) ~  d\theta_1 \cdots d\theta_{\delta},
\nonumber
\een
$r=\sqrt{2(1-\rho)}$,
$\tilde{\gamma}_{p,\delta}= 2a_{2m-2}^{\delta+1}
\dot{M}_{\delta+1|1}/\delta!$ and $\dot{M}_{\delta+1|1}$ is a
bound on the norm of the gradient
$$\nabla_{\bu_{i_1}, \ldots, \bu_{i_\delta}}\ol{f_{\bU_{\ast1-i}, \ldots, \bU_{\ast\delta-i}|\bU_i}(\bu_{i_1}, \ldots, \bu_{i_\delta}|\bu_i)}.$$
Combining (\ref{eq:proof1})-(\ref{eq:proof3}) and the relation
$r=O((1-\rho)^{1/2})$, \ben && \left|\mathbb E[\phi_{i}]- {p-1 \choose \delta}
P_0^{\delta} J(\ol{f_{\bU_{\ast1}, \ldots,
            \bU_{\ast(\delta+1)}}})\right| \nonumber \\ &\leq& O\left(((p-1)P_0)^{\delta} \max\left\{ (p-1)P_0, (1-\rho)^{1/2}\right\} \right).
\label{eq:proof3pp}  \een

Summing over $i$ and recalling  the
definitions (\ref{eq:Lambdadef})  and  (\ref{eq:etadef}) of
$\Lambda$ and $\eta_{p,\delta}$, \ben
\left|\mathbb E[N_{\delta,\rho}]-\Lambda \right| &\leq&
O\left(p((p-1)P_0)^{\delta} \max\left\{ (p-1)P_0,
(1-\rho)^{1/2}\right\} \right) \nonumber
\\
&=&O\left(\eta_{p,\delta}^{\delta} \max\left\{\eta_{p,\delta}
p^{-1/\delta}, (1-\rho)^{1/2}\right\} \right). \label{eq:proof3p}
\een
This establishes the bound (\ref{eq:ENlim}).

Next we prove the bound (\ref{eq:Pvlim}) by using the Chen-Stein method
\cite{arratia1990poisson}.
Define:
\be \tilde{N}_{\delta,\rho}= \frac{1}{\varphi(\delta)} \sum_{i_0=1}^{p} \sum_{1\leq i_1<\ldots <
i_{\delta}\leq p}\prod_{j=1}^{\delta}\phi_{i_0i_j}
\label{eq:rep1}, \ee
Where the second sum is over the indices $1\leq i_1<\ldots <
i_{\delta}\leq p$ such that $i_{j} \neq i_0, 1 \leq j \leq \delta$. For $\vec{i}\defined (i_0,i_1, \ldots, i_{\delta})$ define the
index set $\Beta_{\vec{i}}=\Beta_{i_0,i_1,\ldots,
i_{\delta}}=\{(j_0,j_1,\ldots, j_{\delta}): j_l\in \mathcal
N_k(i_l)\cup\{i_l\}, l=0, \ldots, \delta\}\cap {\mathcal C}^{<}$
where ${\mathcal C}^{<}=\{ (j_0, \ldots, j_\delta): 1 \leq j_0 \leq p, 1\leq j_1<
\cdots < j_\delta \leq p, j_{l} \neq j_0, 1 \leq l \leq \delta \}$.  These index the distinct sets of
points $\bU_{\vec{i}}=\{\bU_{i_0},\bU_{i_1}, \ldots,
\bU_{i_{\delta}}\}$ and their respective
$k$-NN's.  Note that $|\Beta_{\vec{i}}|\leq k^{\delta+1}$. 
Identifying $\tilde{N}_{\delta,\rho}= \sum_{\vec{i}\in {\mathcal
C}^{<}} \prod_{l=1}^{\delta}\phi_{i_0i_l}$ and $N_{\delta,\rho}^*$
a Poisson distributed random variable with rate
$\mathbb E[\tilde{N}_{\delta,\rho}]$, the Chen-Stein bound \cite[Theorem 
1]{arratia1990poisson} is
\be 2\max_A |\mathbb{P}(\tilde{N}_{\delta,\rho} \in
A)-\mathbb{P}(N_{\delta,\rho}^*\in A)| \leq b_1+b_2+b_3,
\label{eq:chenstein} \ee where
$$b_1 = \sum_{\vec{i}\in {\mathcal C}^{<}}\sum_{\vec{j}\in \Beta_{\vec{i}}}
\mathbb E\left[\prod_{l=1}^{\delta}\phi_{i_0i_l}\right]\mathbb E\left[\prod_{q=1}^{\delta}\phi_{j_0j_q}\right],
$$
$$b_2 =
\sum_{\vec{i}\in {\mathcal C}^{<}}\sum_{\vec{j}\in
\Beta_{\vec{i}-\{\vec{i}\}}}
\mathbb E\left[\prod_{l=1}^{\delta}\phi_{i_0i_l}\prod_{q=1}^{\delta}\phi_{j_0j_q}\right],$$
and, for $p_{\vec{i}}=\mathbb E[\prod_{l=1}^{\delta}\phi_{i_0i_l}]$,
$$
b_3 = \sum_{\vec{i}\in {\mathcal C}^{<}} \mathbb E\left[
\mathbb E\left[\left.\prod_{l=1}^{\delta}\phi_{i_0i_l}-p_{\vec{i}}\right|\phi_{\vec{j}}:
\vec{j} \not\in \Beta_{\vec{i}}\right]\right].$$

Over the range of indices in the sum of $b_1$
$\mathbb E[\prod_{l=1}^{\delta}\phi_{ii_l}]$ is of order
$O(P_0^{\delta})$, by (\ref{eq:thetamoms2}), and therefore
$$b_1\leq O\left(p^{\delta+1} k^{\delta+1} P_0^{2\delta} \right)=
O\left(\eta_{p,\delta}^{2\delta}(k/p)^{\delta+1}\right),$$ which
follows from definition (\ref{eq:etadef}).
More care is needed to bound $b_2$ due to the repetition of
characteristic functions $\phi_{ij}$. Since $\vec{i} \neq
\vec{j}$,
$\mathbb E[\prod_{l=1}^{\delta}\phi_{i_0i_l}\prod_{q=1}^{\delta}\phi_{j_0j_q}]$
is a multiplication of at least $\delta+1$ different
characteristic functions, hence by (\ref{eq:thetamoms2}),
$$\mathbb E[\prod_{l=1}^{\delta}\phi_{i_0i_l}\prod_{q=1}^{\delta}\phi_{j_0j_q}] = O\left(P_0^{\delta+1}\right).$$
Therefore, we conclude that \ben b_2&\leq& O\left(p^{\delta+1}
k^{\delta+1} P_0^{\delta+1} \right). \een


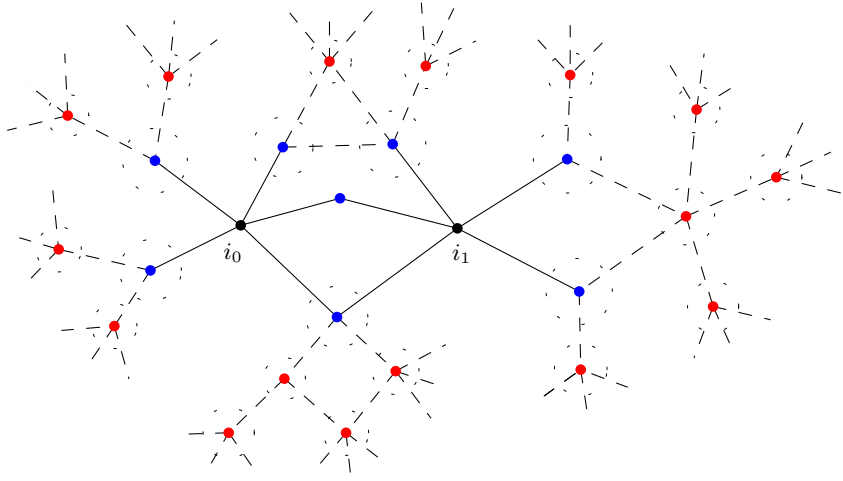
\begin{figure}
\begin{center}
\begin{tikzpicture}[line cap=round,line join=round]
\draw [dash pattern=on 5pt off 5pt] (7.08,4.18)-- (6.94,2.76);
\draw [dash pattern=on 5pt off 5pt] (6.94,2.76)-- (7.3,1.56);
\draw [dash pattern=on 5pt off 5pt] (6.94,2.76)-- (8.14,3.28);
\draw (3.9,2.6)-- (5.36,3.52);
\draw [dash pattern=on 5pt off 5pt] (5.36,3.52)-- (6.94,2.76);
\draw [dash pattern=on 5pt off 5pt] (6.94,2.76)-- (5.52,1.76);
\draw (3.9,2.6)-- (5.52,1.76);
\draw [dash pattern=on 5pt off 5pt] (5.52,1.76)-- (5.54,0.72);
\draw [dash pattern=on 5pt off 5pt] (5.4,4.64)-- (5.36,3.52);
\draw (3.04,3.72)-- (3.9,2.6);
\draw [dash pattern=on 5pt off 5pt] (3.04,3.72)-- (3.48,4.76);
\draw [dash pattern=on 5pt off 5pt] (3.04,3.72)-- (2.2,4.82);
\draw [dash pattern=on 5pt off 5pt] (2.2,4.82)-- (1.58,3.68);
\draw [dash pattern=on 5pt off 5pt] (1.58,3.68)-- (3.04,3.72);
\draw (2.34,3)-- (3.9,2.6);
\draw (1.58,3.68)-- (1.02,2.64);
\draw (1.02,2.64)-- (2.34,3);
\draw (1.02,2.64)-- (2.3,1.42);
\draw (2.3,1.42)-- (3.9,2.6);
\draw [dash pattern=on 5pt off 5pt] (2.3,1.42)-- (3.08,0.7);
\draw [dash pattern=on 5pt off 5pt] (2.3,1.42)-- (1.6,0.6);
\draw [dash pattern=on 5pt off 5pt] (1.6,0.6)-- (2.42,-0.12);
\draw [dash pattern=on 5pt off 5pt] (3.08,0.7)-- (2.42,-0.12);
\draw [dash pattern=on 5pt off 5pt] (1.6,0.6)-- (0.86,-0.12);
\draw (-0.12,3.5)-- (1.02,2.64);
\draw [dash pattern=on 5pt off 5pt] (0.06,4.62)-- (-0.12,3.5);
\draw [dash pattern=on 5pt off 5pt] (-0.12,3.5)-- (-1.28,4.1);
\draw (-0.18,2.04)-- (1.02,2.64);
\draw [dash pattern=on 5pt off 5pt] (-0.18,2.04)-- (-0.66,1.3);
\draw [dash pattern=on 5pt off 5pt] (-0.18,2.04)-- (-1.4,2.32);
\draw [dash pattern=on 5pt off 5pt] (4.96,5.12)-- (5.4,4.64);
\draw [dash pattern=on 5pt off 5pt] (5.4,5.32)-- (5.4,4.64);
\draw [dash pattern=on 5pt off 5pt] (5.84,5.08)-- (5.4,4.64);
\draw [dash pattern=on 5pt off 5pt] (6.72,4.78)-- (7.08,4.18);
\draw [dash pattern=on 5pt off 5pt] (7.08,4.18)-- (7.66,4.54);
\draw [dash pattern=on 5pt off 5pt] (7.08,4.18)-- (7.18,4.92);
\draw [dash pattern=on 5pt off 5pt] (8.14,3.28)-- (8.88,3.62);
\draw [dash pattern=on 5pt off 5pt] (8.5,4.02)-- (8.14,3.28);
\draw [dash pattern=on 5pt off 5pt] (8.14,3.28)-- (9.04,3.06);
\draw [dash pattern=on 5pt off 5pt] (7.3,1.56)-- (7.56,0.98);
\draw [dash pattern=on 5pt off 5pt] (7.3,1.56)-- (7.04,0.86);
\draw [dash pattern=on 5pt off 5pt] (7.3,1.56)-- (8.08,1.38);
\draw [dash pattern=on 5pt off 5pt] (5.04,0.36)-- (5.54,0.72);
\draw [dash pattern=on 5pt off 5pt] (5.04,0.36)-- (5.54,0.72);
\draw [dash pattern=on 5pt off 5pt] (6.18,0.48)-- (5.54,0.72);
\draw [dash pattern=on 5pt off 5pt] (5.54,0.72)-- (5.64,0.06);
\draw [dash pattern=on 5pt off 5pt] (3.08,0.7)-- (3.68,0.5);
\draw [dash pattern=on 5pt off 5pt] (3.08,0.7)-- (3.76,1.06);
\draw [dash pattern=on 5pt off 5pt] (3.08,0.7)-- (3.56,0.06);
\draw [dash pattern=on 5pt off 5pt] (2.42,-0.12)-- (2.94,-0.52);
\draw [dash pattern=on 5pt off 5pt] (2.42,-0.12)-- (2,-0.62);
\draw [dash pattern=on 5pt off 5pt] (2.42,-0.12)-- (2.48,-0.68);
\draw [dash pattern=on 5pt off 5pt] (0.86,-0.12)-- (1.18,-0.58);
\draw [dash pattern=on 5pt off 5pt] (0.86,-0.12)-- (0.22,-0.14);
\draw [dash pattern=on 5pt off 5pt] (0.62,-0.54)-- (0.86,-0.12);
\draw [dash pattern=on 5pt off 5pt] (3.48,4.76)-- (3.42,5.62);
\draw [dash pattern=on 5pt off 5pt] (3.48,4.76)-- (4.16,5.1);
\draw [dash pattern=on 5pt off 5pt] (3.86,5.54)-- (3.48,4.76);
\draw [dash pattern=on 5pt off 5pt] (1.66,5.48)-- (2.2,4.82);
\draw [dash pattern=on 5pt off 5pt] (2.2,4.82)-- (2.2,5.52);
\draw [dash pattern=on 5pt off 5pt] (2.2,4.82)-- (2.78,5.5);
\draw [dash pattern=on 5pt off 5pt] (-1.94,2.48)-- (-1.4,2.32);
\draw [dash pattern=on 5pt off 5pt] (-1.48,3.12)-- (-1.4,2.32);
\draw [dash pattern=on 5pt off 5pt] (-1.4,2.32)-- (-1.86,1.84);
\draw [dash pattern=on 5pt off 5pt] (-1.38,1.24)-- (-0.66,1.3);
\draw [dash pattern=on 5pt off 5pt] (-0.66,1.3)-- (-0.46,0.58);
\draw [dash pattern=on 5pt off 5pt] (-0.66,1.3)-- (-1.04,0.74);
\draw [dash pattern=on 5pt off 5pt] (-1.28,4.1)-- (-1.84,4.54);
\draw [dash pattern=on 5pt off 5pt] (-1.28,4.1)-- (-1.32,4.94);
\draw [dash pattern=on 5pt off 5pt] (-1.28,4.1)-- (-2.1,3.9);
\draw [dash pattern=on 5pt off 5pt] (0.06,4.62)-- (-0.56,5.38);
\draw [dash pattern=on 5pt off 5pt] (0.06,4.62)-- (0.72,5.12);
\draw [dash pattern=on 5pt off 5pt] (0.06,4.62)-- (0.14,5.38);
\draw [line width=0.2pt,dash pattern=on 1pt off 9pt] (-0.12,3.5) circle (0.43cm);
\draw [line width=0.2pt,dash pattern=on 1pt off 9pt] (-0.18,2.04) circle (0.41cm);
\draw [line width=0.2pt,dash pattern=on 1pt off 9pt] (2.3,1.42) circle (0.41cm);
\draw [line width=0.2pt,dash pattern=on 1pt off 9pt] (5.52,1.76) circle (0.44cm);
\draw [line width=0.2pt,dash pattern=on 1pt off 9pt] (5.36,3.52) circle (0.45cm);
\draw [line width=0.2pt,dash pattern=on 1pt off 9pt] (1.58,3.68) circle (0.4cm);
\draw [line width=0.2pt,dash pattern=on 1pt off 9pt] (3.04,3.72) circle (0.44cm);
\draw [line width=0.2pt,dash pattern=on 1pt off 9pt] (7.08,4.18) circle (0.31cm);
\draw [line width=0.2pt,dash pattern=on 1pt off 9pt] (8.14,3.28) circle (0.34cm);
\draw [line width=0.2pt,dash pattern=on 1pt off 9pt] (6.94,2.76) circle (0.38cm);
\draw [line width=0.2pt,dash pattern=on 1pt off 9pt] (7.3,1.56) circle (0.34cm);
\draw [line width=0.2pt,dash pattern=on 1pt off 9pt] (5.4,4.64) circle (0.3cm);
\draw [line width=0.2pt,dash pattern=on 1pt off 9pt] (3.48,4.76) circle (0.3cm);
\draw [line width=0.2pt,dash pattern=on 1pt off 9pt] (2.2,4.82) circle (0.29cm);
\draw [line width=0.2pt,dash pattern=on 1pt off 9pt] (0.06,4.62) circle (0.3cm);
\draw [line width=0.2pt,dash pattern=on 1pt off 9pt] (-1.28,4.1) circle (0.28cm);
\draw [line width=0.2pt,dash pattern=on 1pt off 9pt] (-1.4,2.32) circle (0.3cm);
\draw [line width=0.2pt,dash pattern=on 1pt off 9pt] (-0.66,1.3) circle (0.3cm);
\draw [line width=0.2pt,dash pattern=on 1pt off 9pt] (1.6,0.6) circle (0.28cm);
\draw [line width=0.2pt,dash pattern=on 1pt off 9pt] (5.54,0.72) circle (0.35cm);
\draw [line width=0.2pt,dash pattern=on 1pt off 9pt] (3.08,0.7) circle (0.31cm);
\draw [line width=0.2pt,dash pattern=on 1pt off 9pt] (2.42,-0.12) circle (0.3cm);
\draw [line width=0.2pt,dash pattern=on 1pt off 9pt] (0.86,-0.12) circle (0.34cm);
\begin{small}
\fill [color=black] (3.9,2.6) circle (2.0pt);
\draw[color=black] (3.96,2.26) node {$i_1$};
\fill [color=red] (6.94,2.76) circle (2.0pt);
\fill [color=blue] (5.36,3.52) circle (2.0pt);
\fill [color=blue] (5.52,1.76) circle (2.0pt);
\fill [color=red] (7.08,4.18) circle (2.0pt);
\fill [color=red] (7.3,1.56) circle (2.0pt);
\fill [color=red] (8.14,3.28) circle (2.0pt);
\fill [color=red] (5.54,0.72) circle (2.0pt);
\fill [color=red] (5.4,4.64) circle (2.0pt);
\fill [color=blue] (2.34,3) circle (2.0pt);
\fill [color=blue] (3.04,3.72) circle (2.0pt);
\fill [color=blue] (1.58,3.68) circle (2.0pt);
\fill [color=red] (3.48,4.76) circle (2.0pt);
\fill [color=red] (2.2,4.82) circle (2.0pt);
\fill [color=blue] (2.3,1.42) circle (2.0pt);
\fill [color=black] (1.02,2.64) circle (2.0pt);
\draw[color=black] (0.92,2.28) node {$i_0$};
\fill [color=red] (1.6,0.6) circle (2.0pt);
\fill [color=red] (3.08,0.7) circle (2.0pt);
\fill [color=red] (2.42,-0.12) circle (2.0pt);
\fill [color=red] (0.86,-0.12) circle (2.0pt);
\fill [color=blue] (-0.12,3.5) circle (2.0pt);
\fill [color=red] (0.06,4.62) circle (2.0pt);
\fill [color=red] (-1.28,4.1) circle (2.0pt);
\fill [color=blue] (-0.18,2.04) circle (2.0pt);
\fill [color=red] (-0.66,1.3) circle (2.0pt);
\fill [color=red] (-1.4,2.32) circle (2.0pt);
\fill [color=white] (4.96,5.12) circle (0.5pt);
\fill [color=white] (5.4,5.32) circle (0.5pt);
\fill [color=white] (5.84,5.08) circle (0.5pt);
\fill [color=white] (5.38,5.92) circle (0.5pt);
\fill [color=white] (6.72,4.78) circle (0.5pt);
\fill [color=white] (7.66,4.54) circle (0.5pt);
\fill [color=white] (7.18,4.82) circle (0.5pt);
\fill [color=white] (7.32,5.46) circle (0.5pt);
\fill [color=white] (8.5,4.02) circle (0.5pt);
\fill [color=white] (9.04,3.06) circle (0.5pt);
\fill [color=white] (8.88,3.62) circle (0.5pt);
\fill [color=white] (9.6,3.92) circle (0.5pt);
\fill [color=white] (7.04,0.86) circle (0.5pt);
\fill [color=white] (7.56,0.98) circle (0.5pt);
\fill [color=white] (8.08,1.38) circle (0.5pt);
\fill [color=white] (7.78,0.34) circle (0.5pt);
\fill [color=white] (5.04,0.36) circle (0.5pt);
\fill [color=white] (6.18,0.48) circle (0.5pt);
\fill [color=white] (5.64,0.06) circle (0.5pt);
\fill [color=white] (5.72,-0.52) circle (0.5pt);
\fill [color=white] (3.76,1.06) circle (0.5pt);
\fill [color=white] (3.68,0.5) circle (0.5pt);
\fill [color=white] (3.56,0.06) circle (0.5pt);
\fill [color=white] (2,-0.62) circle (0.5pt);
\fill [color=white] (2.94,-0.52) circle (0.5pt);
\fill [color=white] (2.48,-0.68) circle (0.5pt);
\fill [color=white] (0.22,-0.14) circle (0.5pt);
\fill [color=white] (0.62,-0.54) circle (0.5pt);
\fill [color=white] (1.18,-0.58) circle (0.5pt);
\fill [color=white] (0.36,-1.02) circle (0.5pt);
\fill [color=white] (-1.38,1.24) circle (0.5pt);
\fill [color=white] (-0.46,0.58) circle (0.5pt);
\fill [color=white] (-1.04,0.74) circle (0.5pt);
\fill [color=white] (-1.5,0.2) circle (0.5pt);
\fill [color=white] (-1.48,3.12) circle (0.5pt);
\fill [color=white] (-1.86,1.84) circle (0.5pt);
\fill [color=white] (-1.94,2.48) circle (0.5pt);
\fill [color=white] (-2.58,2.66) circle (0.5pt);
\fill [color=white] (-1.32,4.94) circle (0.5pt);
\fill [color=white] (-2.1,3.9) circle (0.5pt);
\fill [color=white] (-1.84,4.54) circle (0.5pt);
\fill [color=white] (-2.34,4.94) circle (0.5pt);
\fill [color=white] (-0.56,5.38) circle (0.5pt);
\fill [color=white] (0.72,5.12) circle (0.5pt);
\fill [color=white] (0.14,5.38) circle (0.5pt);
\fill [color=white] (0.22,6.14) circle (0.5pt);
\fill [color=white] (3.42,5.62) circle (0.5pt);
\fill [color=white] (4.16,5.1) circle (0.5pt);
\fill [color=white] (3.86,5.54) circle (0.5pt);
\fill [color=white] (2.2,5.52) circle (0.5pt);
\fill [color=white] (1.66,5.48) circle (0.5pt);
\fill [color=white] (2.78,5.5) circle (0.5pt);
\fill [color=white] (2.56,-1.14) circle (0.5pt);
\fill [color=white] (4.22,0.28) circle (0.5pt);
\fill [color=white] (2.2,6.14) circle (0.5pt);
\fill [color=white] (4.1,6.02) circle (0.5pt);
\end{small}
\end{tikzpicture}
\caption{The complementary k-NN set $A_k(\vec{i})$ illustrated for $\delta=1$ and $k=5$. Here we have $\vec{i}=(i_0,i_1)$. The vertices $i_{0}, i_{1}$ 
and their $k$-NNs 
are depicted in black and blue respectively. 
The complement of the union of $\{i_0,i_1\}$ and its $k$-NNs is the complementary $k$-NN set 
$A_k(\vec{i})$ and is 
depicted in red.}
\label{fig:KNNGraph1}
\end{center}
\end{figure}

Next we bound the term $b_3$ in (\ref{eq:chenstein}). The set
\be
A_k(\vec{i})=  \Beta_{\vec{i}}^c - \{\vec{i}\}
\label{eq:CompKNN}
\ee
indexes the complementary $k$-NN of $\bU_{\vec{i}}$ (see Fig. \ref{fig:KNNGraph1}) so that, using the
representation   (\ref{eq:thetamoms2}), \ben b_3 &=&
\sum_{\vec{i}\in {\mathcal C}^{<}} \mathbb E\left[\mathbb E\left[
\left.\prod_{l=1}^{\delta}\phi_{i_0i_l}-p_{\vec{i}}\right|\bU_{A_k(\vec{i})}\right]\right]
\\
&=&  \sum_{\vec{i}\in {\mathcal C}^{<}}
\int_{S_{2m-2}^{|A_k(\vec{i})|}} d\bfu_{A_k(\vec{i})}
\left(\prod_{l=1}^{\delta} \int_{S_{2m-2}}
d\bfu_{i_0} \int_{A(r,\bfu_{i_0})} d\bfu_{i_l}\right) \\ && 
\left( \frac{f_{\bfU_{\vec{i}}
|\bfU_{A_k}}(\bfu_{\vec{i}}|\bfu_{A_k(\vec{i})})-
f_{\bfU_{\vec{i}}}(\bfu_{\vec{i}})}{f_{\bfU_{\vec{i}}}(\bfu_{\vec{i}})}
\right) 
f_{\bfU_{\vec{i}}}(\bfu_{\vec{i}})f_{\bfU_{A_k(\vec{i})}}(\bfu_{A_k(\vec{i})})
\\
&\leq& O\left(p^{\delta+1}  P_0^{\delta}
\|\Delta_{p,m,k,\delta}\|_1\right) =
O\left(\eta_{p,\delta}^{\delta}\|\Delta_{p,m,k,\delta}\|_1\right).\een
Note that by definition of $\tilde{N}_{\delta,\rho}$ we have $\tilde{N}_{\delta,\rho}>0$ if and only if $N_{\delta,\rho}>0$. This yields: \be && \left|\mathbb{P}(N_{\delta,\rho}
>0)-\left(1-\exp(-\Lambda)\right)\right| \leq
\left|\mathbb{P}(\tilde{N}_{\delta,\rho} >0)-\mathbb{P}(N_{\delta,\rho} >0)\right|+ \nonumber \\ &&
\left|\mathbb{P}(\tilde{N}_{\delta,\rho} >0)-\left(1-\exp(-\mathbb E[\tilde{N}_{\delta,\rho}])\right)\right|
 +
\left|\exp(-\mathbb E[\tilde{N}_{\delta,\rho}])-\exp(-\Lambda)\right| \nonumber \\
&\leq & b_1+b_2+b_3 +
O\left(\left|\mathbb E[\tilde{N}_{\delta,\rho}]-\Lambda \right|\right)
\label{eq:allterms} \ee Combining the above inequalities on $b_1$,
$b_2$ and $b_3$ yields the first three terms in the argument of
the ``max'' on the right side of (\ref{eq:Pvlim}).

It remains to bound the  term
$|\mathbb E[\tilde{N}_{\delta,\rho}]-\Lambda|$. Application of
the mean value theorem to the multiple integral
(\ref{eq:thetamoms2}) gives \ben
\left|\mathbb E\left[\prod_{l=1}^{\delta}\phi_{ii_l}\right] - P_0^{\delta}
J\left(f_{\bfU_{i_1},\ldots,
\bfU_{i_{\delta}},\bfU_{i}}\right)\right| &\leq&
O\left(P_0^{\delta} r \right). \label{eq:phi_ijdbnd} \een
Applying relation (\ref{eq:rep1}) yields
%
\ben \left|\mathbb E[\tilde{N}_{\delta,\rho}] - p{p-1 \choose \delta}
P_0^{\delta} J\left(\ol{f_{\bU_{\ast1}, \ldots,
\bU_{\ast(\delta+1)}}}\right)\right| &\leq& O\left(p^{\delta+1}
P_0^{\delta} r\right) = O\left(\eta_{p,\delta}^{\delta} r\right).
\label{eq:phi_ijdbnd2} \een Combine this with (\ref{eq:allterms})
to obtain the bound (\ref{eq:Pvlim}). This completes the proof of
Theorem \ref{prop:parcor}. \qed
\end{proof}

An immediate consequence of Theorem \ref{prop:parcor} is the following result, similar to Proposition 2 in \cite{hero2012hub}, which provides asymptotic expressions for the mean number of $\delta$-hubs and the probability of the event $N_{\delta,\rho}>0$ as $p$ goes to $\infty$ and $\rho$ converges to $1$ at a prescribed rate. 

\begin{coro}
Let $\rho_p\in[0,1]$ be a sequence converging to one as $p\rightarrow
\infty$ such that $\eta_{p,\delta} = p^{1/\delta}(p-1)(1-\rho_p^2)^{(m-2)} \rightarrow
e_{m,\delta}\in (0,\infty)$. Then
\be
\lim_{p\rightarrow\infty}\mathbb E[N_{\delta,\rho_p}] = \Lambda_\infty =e_{m,\delta}^\delta/\delta! \;  \lim_{p\rightarrow \infty}
J(\ol{f_{\bU_{\ast1}, \ldots, \bU_{\ast(\delta+1)}}}).
\label{eq:ENdef}
\ee
Assume that $k=o(p^{1/\delta})$ and that for the weak dependency coefficient $\|\Delta_{p,m,k,\delta}\|_1$, defined via \eqref{eq:Deltapdefavg}, we have
$\lim_{p\rightarrow\infty} \|\Delta_{p,m,k,\delta}\|_1=0$. Then
\be
\mathbb{P}(N_{\delta,\rho_p}>0)\rightarrow
1-\exp(-\Lambda_\infty / \varphi(\delta)).
\label{eq:Poissonconv}
\ee

\label{prop:parcor1}
\end{coro}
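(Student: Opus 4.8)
The statement is a limiting consequence of the finite-dimensional bounds \eqref{eq:ENlim} and \eqref{eq:Pvlim} of Theorem \ref{prop:parcor}, so the plan is to pass to the limit $p \to \infty$ in those two inequalities and, along the way, to evaluate $\lim_{p\to\infty}\Lambda$. I would treat the mean \eqref{eq:ENdef} and the probability \eqref{eq:Poissonconv} in turn.

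For the mean, I would first compute $\lim_{p\to\infty}\Lambda$. Starting from $\Lambda = p\binom{p-1}{\delta}P_0^{\delta} J(\ol{f_{\bU_{\ast1}, \ldots, \bU_{\ast(\delta+1)}}})$ in \eqref{eq:Lambdadef}, I would substitute $P_0 = \eta_{p,\delta}/\bigl(p^{1/\delta}(p-1)\bigr)$ from \eqref{eq:etadef} (itself a consequence of Lemma \ref{Lemma:P0}), which collapses $\Lambda$ to $\bigl(\binom{p-1}{\delta}/(p-1)^{\delta}\bigr)\,\eta_{p,\delta}^{\delta}\, J(\ol{f_{\bU_{\ast1}, \ldots, \bU_{\ast(\delta+1)}}})$. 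Since $\binom{p-1}{\delta}/(p-1)^{\delta}\to 1/\delta!$, $\eta_{p,\delta}\to e_{m,\delta}$ by hypothesis, and $\lim_{p\to\infty}J$ exists by assumption, this gives $\Lambda \to e_{m,\delta}^{\delta}/\delta!\cdot\lim_{p\to\infty} J = \Lambda_\infty$. It then remains to see that the right-hand side of \eqref{eq:ENlim} vanishes: because $\eta_{p,\delta}\to e_{m,\delta}<\infty$ the factor $\eta_{p,\delta}^{\delta}$ stays bounded, while $\eta_{p,\delta}p^{-1/\delta}\to 0$ and $(1-\rho_p)^{1/2}\to 0$ (as $\rho_p\to 1$), so the entire bound tends to $0$. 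The triangle inequality then yields $\mathbb E[N_{\delta,\rho_p}]\to\Lambda_\infty$.

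For the probability, I would first check that the side condition $(p-1)P_0\le 1$ required by \eqref{eq:Pvlim} holds for all large $p$: indeed $(p-1)P_0 = \eta_{p,\delta}p^{-1/\delta}\to 0$. I would then verify that every term inside the ``max'' on the right of \eqref{eq:Pvlim} vanishes. Using $k=o(p^{1/\delta})$ we get $(k/p^{1/\delta})^{\delta+1}\to 0$, hence $Q_{p,k,\delta}\to 0$; moreover $k/p\le k/p^{1/\delta}\to 0$ for $\delta\ge 1$, so $\eta_{p,\delta}^{\delta}(k/p)^{\delta+1}\to 0$ as well since $\eta_{p,\delta}^{\delta}$ is bounded; the term $\|\Delta_{p,m,k,\delta}\|_1\to 0$ by hypothesis; and $p^{-1/\delta},(1-\rho_p)^{1/2}\to 0$. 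With $\eta_{p,\delta}^{\delta}$ bounded, the bound \eqref{eq:Pvlim} tends to $0$, giving $\bigl|\mathbb P(N_{\delta,\rho_p}>0)-(1-\exp(-\Lambda/\varphi(\delta)))\bigr|\to 0$. Combining this with $\Lambda\to\Lambda_\infty$ and the continuity of $x\mapsto 1-e^{-x/\varphi(\delta)}$ gives \eqref{eq:Poissonconv}. The same reasoning handles both the $\delta>1$ and $\delta=1$ branches of \eqref{eq:Pvlim}.

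The computation is essentially mechanical once Theorem \ref{prop:parcor} is available; the one step requiring genuine care is the asymptotic evaluation of $\Lambda$, where the explicit form of $P_0$ from Lemma \ref{Lemma:P0} and the definition of $\eta_{p,\delta}$ must be combined with the combinatorial factor $p\binom{p-1}{\delta}$ so that the surviving constant is \emph{exactly} $e_{m,\delta}^{\delta}/\delta!$. The only other points to track are the bookkeeping of which error terms stay bounded versus which vanish, and the verification that the hypothesis $(p-1)P_0\le 1$ eventually holds.
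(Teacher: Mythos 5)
Your proposal is correct and follows exactly the route the paper intends: the corollary is stated as an immediate consequence of Theorem \ref{prop:parcor}, obtained by substituting $P_0=\eta_{p,\delta}/(p^{1/\delta}(p-1))$ into \eqref{eq:Lambdadef} so that $\Lambda\to e_{m,\delta}^{\delta}/\delta!\cdot\lim_p J$, and then checking that every term in the error bounds \eqref{eq:ENlim} and \eqref{eq:Pvlim} vanishes under the stated hypotheses. Your bookkeeping (boundedness of $\eta_{p,\delta}^{\delta}$, the eventual validity of $(p-1)P_0\le 1$, and the role of $k=o(p^{1/\delta})$) matches what the paper's argument requires.
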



Corollary \ref{prop:parcor1} shows that in the limit $p \to \infty$, the number of detected hubs depends on the true population correlations only through the quantity $J(\ol{f_{\bU_{\ast1}, \ldots, \bU_{\ast(\delta+1)}}})$.  In some cases $J(\ol{f_{\bU_{\ast1}, \ldots, \bU_{\ast(\delta+1)}}})$ can be evaluated explicitly. Similar to the argument in \cite{hero2012hub}, it can be shown that if the population covariance matrix $\mathbf \Sigma$ is sparse in the sense that its non-zero off-diagonal entries can be arranged into a $k \times k$ submatrix by reordering rows and columns, then 
\ben
J(\ol{f_{\bU_{\ast1}, \ldots, \bU_{\ast(\delta+1)}}}) = 1 + O(k/p).
\een
Hence, if $k=o(p)$ as $p \rightarrow \infty$, the quantity $J(\ol{f_{\bU_{\ast1}, \ldots, \bU_{\ast(\delta+1)}}})$ 
converges to $1$. If $\mathbf \Sigma$ is diagonal, then $J(\ol{f_{\bU_{\ast1}, \ldots, \bU_{\ast(\delta+1)}}}) = 1$ exactly. In such cases, the quantity $\Lambda_{\infty}$ in Corollary \ref{prop:parcor1} does not depend on the unknown underlying distribution of the U-scores. As a result, the expected number of $\delta$-hubs in $\mathcal G_{\rho}({\mathbf \Psi})$ and the probability of discovery of at least one $\delta$-hub 
do not depend on the underlying distribution. We will see in Sec. \ref{sec:MultipleInf} that this result is useful in assigning statistical significance levels 
to vertices of the graph $\mathcal G_{\rho}({\mathbf \Psi})$.


\subsection{Phase Transitions and Critical Threshold}
\label{subsec:corrScreenPhase}

It can be seen from Theorem \ref{prop:parcor} and Corollary \ref{prop:parcor1} that 
the number of $\delta$-hub discoveries exhibits a phase transition in the high-dimensional regime where the number of variables $p$ can be very large relative to the number of samples $m$. Specifically, assume that the population covariance matrix $\mathbf\Sigma$ is block-sparse as in Section \ref{sec:Theoretical}.  Then as the correlation threshold $\rho$ is reduced, 
the number of $\delta$-hub discoveries abruptly increases to the maximum, $p$. Conversely as 
$\rho$ increases, the number of discoveries 
quickly approaches zero. Similarly, the family-wise error rate 
(i.e.\ the probability of discovering at least one $\delta$-hub in a graph with no true hubs) exhibits a phase transition as a function of 
$\rho$. Figure \ref{fig:FalseDiscoveryp1000d1} shows the family-wise error 
rate obtained via expression \eqref{eq:Poissonconv} for $\delta=1$ and $p=1000$, as a function of 
$\rho$ and the number of samples $m$. It is 
seen that for a fixed value of $m$ there is a sharp transition in the family-wise error 
rate as a function of $\rho$.
\begin{figure}
\begin{center}
\includegraphics[width=9cm]{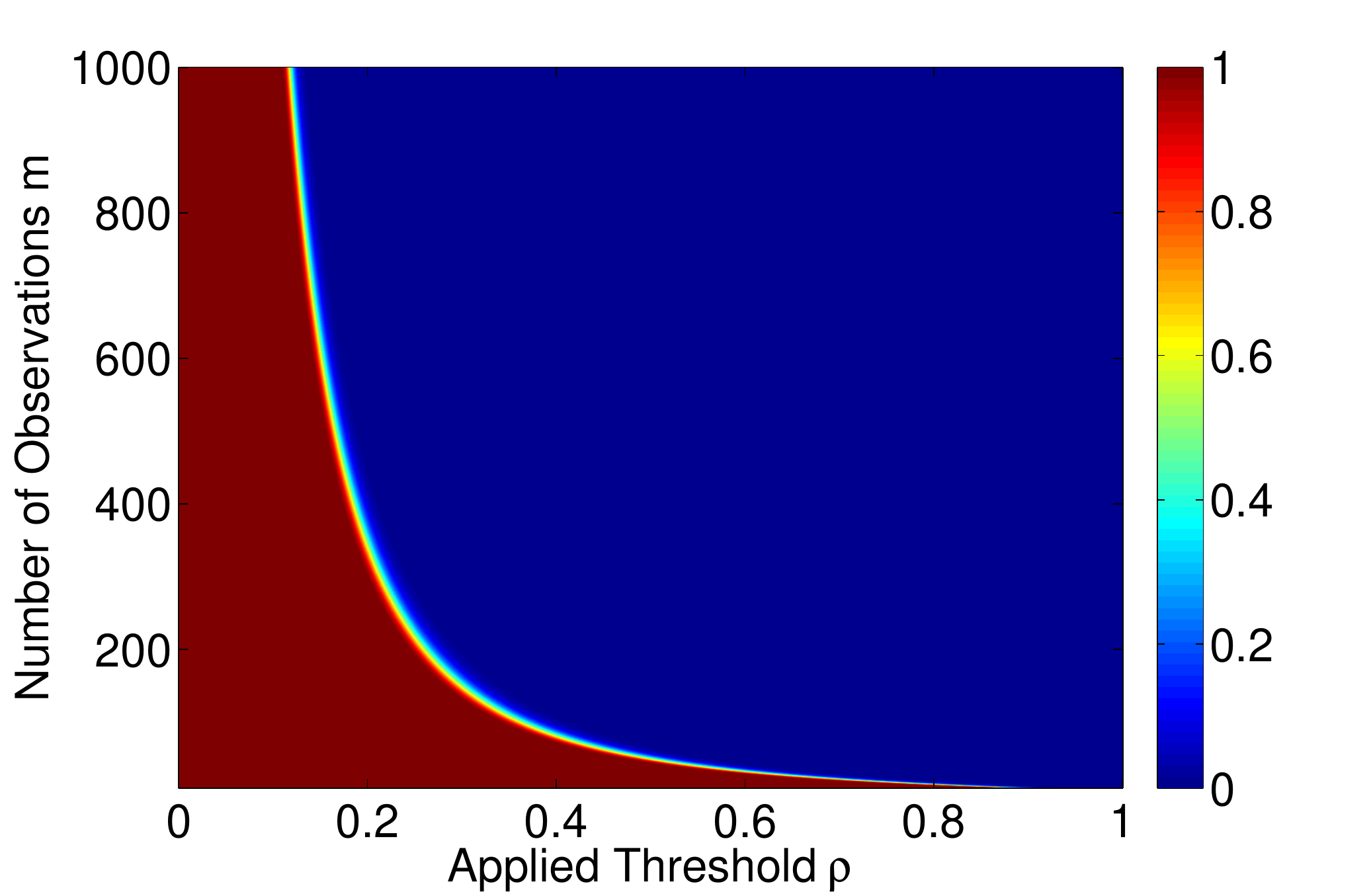}
\end{center}
\caption{
Family-wise error rate as a function of correlation threshold $\rho$ and number of samples $m$ for $p=1000, \delta=1$. The phase transition phenomenon is clearly observable in the plot.}
\label{fig:FalseDiscoveryp1000d1}
\end{figure}

The phase transition phenomenon motivates the definition of a critical threshold $\rho_{c,\delta}$ as the threshold $\rho$ satisfying the following slope condition: 
\ben
\partial \mathbb E[N_{\delta,\rho}]/ \partial \rho = -p.
\label{eq:CricThrshldDef}
\een
Using \eqref{eq:Lambdadef} the solution of the above equation can be approximated via the expression below:
\be
\rho_{c,\delta} = \sqrt{1-(c_{m,\delta}(p-1))^{-2\delta/(\delta(2m-3)-2)}},
\label{CriticalThreshold}
\ee
where $c_{m,\delta} = b_{m-1} \delta J(\ol{f_{\bU_{\ast1}, \ldots, \bU_{\ast(\delta+1)}}})$. 
The screening threshold $\rho$ should be chosen 
greater than $\rho_{c,\delta}$ to prevent excessively 
large numbers of false positives. 
Note that the critical threshold $\rho_{c,\delta}$ also does not depend on the underlying distribution of the U-scores when the covariance matrix $\mathbf\Sigma$ is block-sparse.  

Expression \eqref{CriticalThreshold} is similar to the expression obtained in \cite{hero2012hub} for the critical threshold in real-valued correlation screening. However, in the complex-valued case 
the coefficient $c_{m,\delta}$ and the exponent of the term $c_{m,\delta}(p-1)$ are different from the real case. This generally results in smaller values of $\rho_{c,\delta}$ for fixed 
$m$ and $\delta$. 

Figure \ref{fig:Critical1} shows the value of $\rho_{c,\delta}$ obtained via \eqref{CriticalThreshold} as a function of $m$ for different values of $\delta$ and $p$.  The critical threshold
decreases as either the sample size $m$ increases, the number of variables $p$ decreases, or the
vertex degree $\delta$ increases. Note that even for ten billion ($10^{10}$) dimensions (upper triplet of curves in the
figure) only a relatively small number of samples are necessary for complex-valued correlation screening to be useful. For example, with $m = 200$ one can reliably discover connected vertices ($\delta= 1$ in the
figure) having correlation greater than $\rho_{c,\delta}= 0.5$.

\begin{figure}
\begin{center}
\includegraphics[width=9cm]{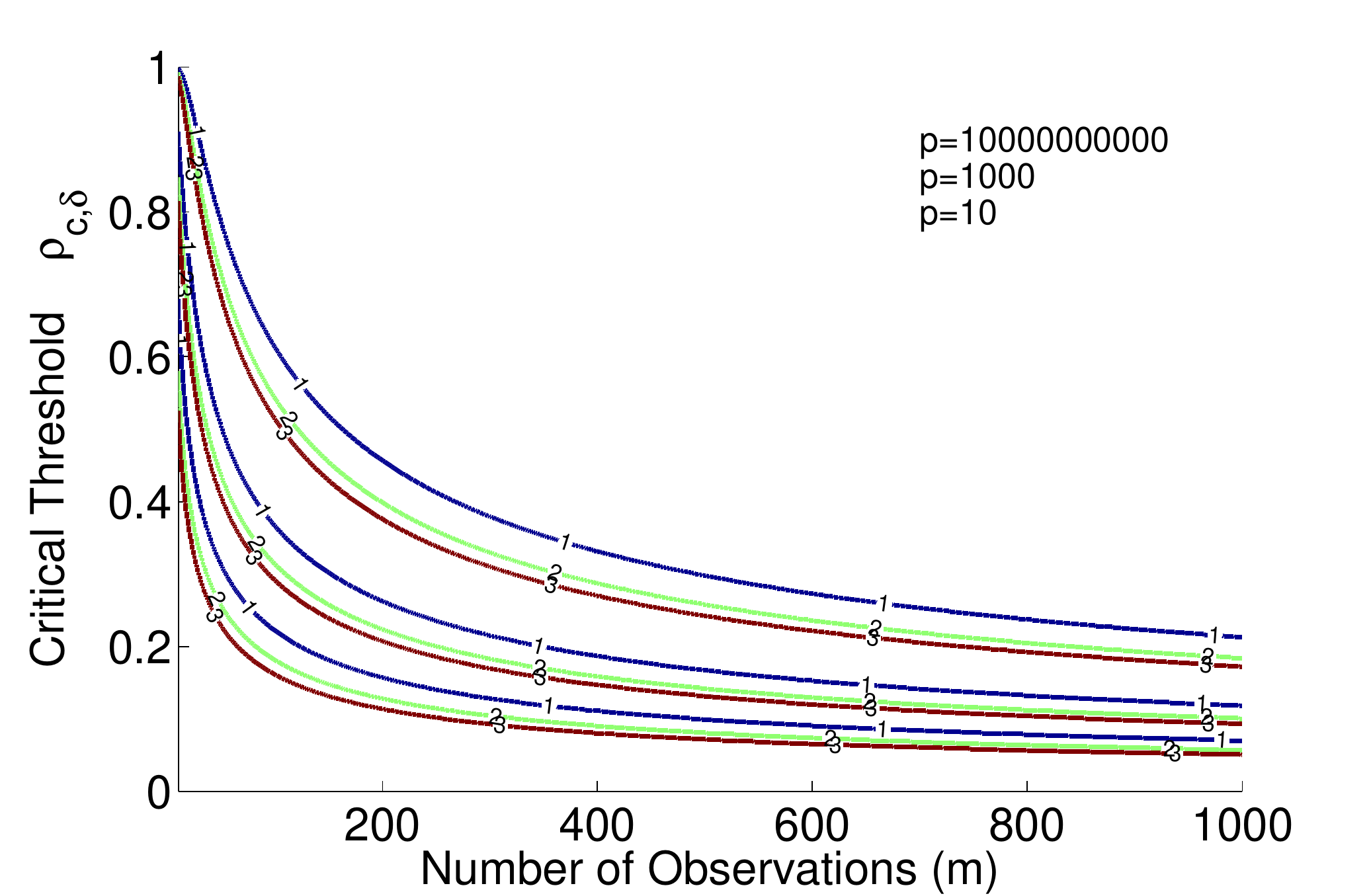}
\end{center}
\caption{The critical threshold $\rho_{c,\delta}$ as a function of the sample size $m$ for $\delta=1, 2, 3$ (curve labels) and $p=10, 1000, 10^{10}$ (bottom to top triplets of curves). The figure shows that the critical threshold decreases as either $m$ or $\delta$ increases. When the number of samples $m$ is small the critical threshold is close to $1$ in which case reliable hub discovery is impossible. However  a relatively small increment in $m$ is sufficient to reduce the critical threshold significantly. For example for $p=10^{10}$, only $m=200$ samples are enough to bring $\rho_{c,1}$ down to $0.5$.}
\label{fig:Critical1}
\end{figure}

\section{Application to Spectral Screening of Multivariate Gaussian Time Series}
\label{sec:MultipleInf}

In this section, the complex-valued correlation hub screening method of Section \ref{sec:corrScreen} is applied to stationary multivariate Gaussian time series. 
Assume that the time series $X^{(1)},\cdots,X^{(p)}$ defined in Section \ref{GaussianWSS} satisfy the conditions of Corollary \ref{cor:independence}. 
Assume also that a total of $N = n \times m$ time samples of $X^{(1)},\cdots,X^{(p)}$ are available. We divide the $N$ samples into $m$ parts of $n$ consecutive samples and we take the $n$-point DFT of each part. Therefore, for each time series, at each frequency $f_i = (i-1) /n$, $1 \leq i \leq n$, $m$ samples are available. This allows us to construct a (partial) correlation graph corresponding to each frequency. We denote the (partial) correlation graph corresponding to frequency $f_i$ and correlation threshold $\rho_i$ as $\mathcal G_{f_i,\rho_i}$. $\mathcal G_{f_i,\rho_i}$ has $p$ vertices $v_1,v_2,\cdots,v_p$ corresponding to time series $X^{(1)},X^{(2)},\cdots,X^{(p)}$, respectively. Vertices $v_k$ and $v_l$ are connected if the magnitude of the sample (partial) correlation between the DFTs of $X^{(k)}$ and $X^{(l)}$ at frequency $f_i$ (i.e. the sample (partial) correlation between $Y^{(k)}(i-1)$ and $Y^{(l)}(i-1)$) is at least $\rho_i$.

Consider a single frequency $f_{i}$ and the null hypothesis, $\mathcal{H}_0$, that the 
correlations among the time series $X^{(1)},X^{(2)},\cdots,X^{(p)}$ at frequency $f_{i}$ are block sparse 
in the sense of Section \ref{sec:Theoretical}.  
As discussed in Sec.~\ref{sec:Theoretical}, under $\mathcal{H}_{0}$ 
the expected number of $\delta$-hubs and the probability of discovery of at least one $\delta$-hub in graph $\mathcal G_{f_{i},\rho_i}$ are not functions of the unknown underlying distribution of the data. 
Therefore the results of Corollary \ref{prop:parcor1} may be used to quantify the statistical significance of declaring 
vertices of $\mathcal G_{f_{i},\rho_i}$ to be $\delta$-hubs.  The statistical significance is represented by the 
p-value, defined in general as the probability of having 
a test statistic at least as extreme as the value actually observed 
assuming that the null hypothesis $\mathcal{H}_{0}$ is true. 
In the case of correlation hub screening, the p-value $pv_{\delta}(j)$ assigned to vertex $v_j$ for being a $\delta$-hub is the maximal probability that $v_j$ maintains degree $\delta$ given the observed sample correlations, 
assuming that the block-sparse hypothesis $\mathcal{H}_{0}$ is true.
The detailed procedure for assigning p-values 
is similar to the procedure in \cite{hero2012hub} for real-valued correlation screening and is illustrated in Fig.~\ref{fig:pvals}. Equation \eqref{CriticalThreshold} helps in choosing  the initial threshold $\rho^*$.

\begin{figure}[!h]

\vrule
\begin{minipage}{11.5cm} 
   \centering
\hrule \vspace{0.5em} 
   \begin{minipage}{11cm}    
        \begin{itemize}
         {\setlength\itemindent{5pt}\item Initialization:

           \begin{enumerate}
			
			 \item Choose a degree threshold $\delta \geq 1$.
				
             \item Choose an initial threshold $\rho^*>\rho_{c,\delta}$.

             \item Calculate the degree $d_j$ of each vertex of graph ${\mathcal G}_{\rho^*}({\mathbf \Psi})$.
			\item Select a value of $\delta \in \{1,\cdots,\max_{1 \leq j \leq p} d_j\}$.
			
           \end{enumerate}

        	\item For each $j = 1,\cdots, p$ find $\rho_j(\delta)$ as the $\delta$th greatest element of the $j$th row of the sample (partial) correlation matrix.
        	
        	\item Approximate the p-value corresponding to vertex $v_{j}$ 
	as $pv_{\delta}(j) \approx 1-\exp(-\mathbb E[N_{\delta,\rho_j(\delta)}] / \varphi(\delta))$, where $\mathbb E[N_{\delta,\rho_j(\delta)}]$ is approximated by 
	the limiting expression \eqref{eq:ENdef} using $J(\ol{f_{\bU_{\ast1}, \ldots, \bU_{\ast(\delta+1)}}})=1$.

\item Screen variables by thresholding the p-values $pv_\delta(j)$ at desired significance level.
}
        \end{itemize}

\end{minipage}
\vspace{0.5em} \hrule
\end{minipage}\vrule \\
\\
\caption{Procedure for assigning p-values to the vertices of ${\mathcal G}_{\rho^*}({\mathbf 			\Psi})$.}
\label{fig:pvals}
\end{figure}

Given Corollary \ref{cor:independence}, for $i \neq j$ the correlation graphs $\mathcal G_{f_i,\rho_i}$ and $\mathcal G_{f_j,\rho_j}$ and their associated inferences are approximately independent.  Thus we can solve multiple inference problems by first performing correlation hub screening on each graph as discussed above and then aggregating the inferences at each frequency in a straightforward manner.  Examples of aggregation procedures are described below.

\subsection{Disjunctive Hubs}
One task that can be easily performed 
is finding the p-value for a given time series 
to be a hub in at least one of the graphs $\mathcal G_{f_1,\rho_1}, \cdots, \mathcal G_{f_n,\rho_n}$. More specifically, for each $j = 1, \ldots, p$ 
 denote the p-values for vertex $v_j$ being a $\delta$-hub in $\mathcal G_{f_1,\rho_1}, \cdots, \mathcal G_{f_n,\rho_n}$ by $pv_{f_1,\rho_1,\delta}(j),\cdots,pv_{f_n,\rho_n,\delta}(j)$ respectively.  These p-values are obtained using the method of Fig. \ref{fig:pvals}.  Then $pv_{\delta}(j)$, the p-value for the vertex $v_j$ being a $\delta$-hub in at least one of the frequency graphs $\mathcal G_{f_1,\rho_1}, \cdots, \mathcal G_{f_n,\rho_n}$ can be approximated as:
\ben
\mathbb{P}(\exists i: d_{j,f_i} \geq \delta ~|\mathcal{H}_0) \approx \hat{pv}_{\delta}(j) = 1-\prod_{i=1}^{n}(1-pv_{f_i,\rho_i,\delta}(j)),
\een
in which $d_{j,f_i}$ is the degree of $v_j$ in the graph $\mathcal G_{f_i,\rho_i}$.
\subsection{Conjunctive Hubs}
Another property of interest is the existence of a hub at all frequencies for a particular time series. 
In this case we have:
\ben
\mathbb{P}(\forall i: d_{j,f_i} \geq \delta ~| \mathcal{H}_0) \approx \check{pv}_{\delta}(j) = \prod_{i=1}^{n}pv_{f_i,\rho_i,\delta}(j).
\een
\subsection{General Persistent Hubs}
The general case is the event that at least $K$ frequencies have hubs of degree at least $\delta$ at vertex $v_j$. For this general case we have:
\ben
&&\mathbb{P}(\exists i_1,\ldots,i_{K} : d_{j,f_{i_1}} \geq \delta, \ldots d_{j,f_{i_{K}}} \geq \delta ~| \mathcal{H}_0) = \\&& \sum_{k'=K}^{n} \sum_{\stackrel{i_1< \ldots< i_{k'}, i_{k'+1} < \ldots < i_n }{\{i_1,\ldots,i_n\} = \{1, \ldots, n\}}} \prod_{l=1}^{k'} pv_{f_{i_l},\rho_{i_l},\delta}(j) \prod_{l'=k'+1}^{n} \left(1 - pv_{f_{i_{l'}},\rho_{i_{l'}},\delta}(j)\right).
\een

\section{Experimental Results}
\label{sec:Sims}
\subsection{Phase Transition Phenomenon and Mean Number of Hubs}
We first performed numerical simulations to confirm Theorem \ref{prop:parcor} and Corollary \ref{prop:parcor1} for complex-valued correlation screening.
Samples were generated from $p$ uncorrelated complex Gaussian random variables. 
Figure \ref{fig:PhaseTrans} shows the number of discovered $1$-hubs 
for $p=1000$ and several sample sizes $m$. The plots from left to right correspond to $m = 2000, 1000, 500, 100, 50, 20, 10, 6$ and $4$, respectively.  The phase transition phenomenon is clearly observed in the plot. Table \ref{table:PTpredict} shows the predicted value obtained from formula \eqref{CriticalThreshold} for the critical threshold. As can be seen in Fig. \ref{fig:PhaseTrans}, the empirical phase transition thresholds approximately match the predicted values of Table \ref{table:PTpredict}. 
Moreover, to confirm the accuracy of equation \eqref{eq:ENdef} in Corollary \ref{prop:parcor1}, we list the number of hubs for $m=100$ in Table \ref{table:EN}. The left column shows the empirical average number of hubs of degree at least $\delta =1,2,3,4$ in a network of i.i.d. complex Gaussian random variables. The numbers in this column are obtained by averaging $1000$ independent experiments. The right column shows the predicted value of $\mathbb E[N_{\delta,\rho}]$ obtained via formula \eqref{eq:ENdef} with $J(\ol{f_{\bU_{\ast1}, \ldots, \bU_{\ast(\delta+1)}}})=1$ for the i.i.d. case. 
As we see the empirical and predicted values are close to each other.

\begin{figure}[htb]
\begin{center}
\includegraphics[height=7cm]{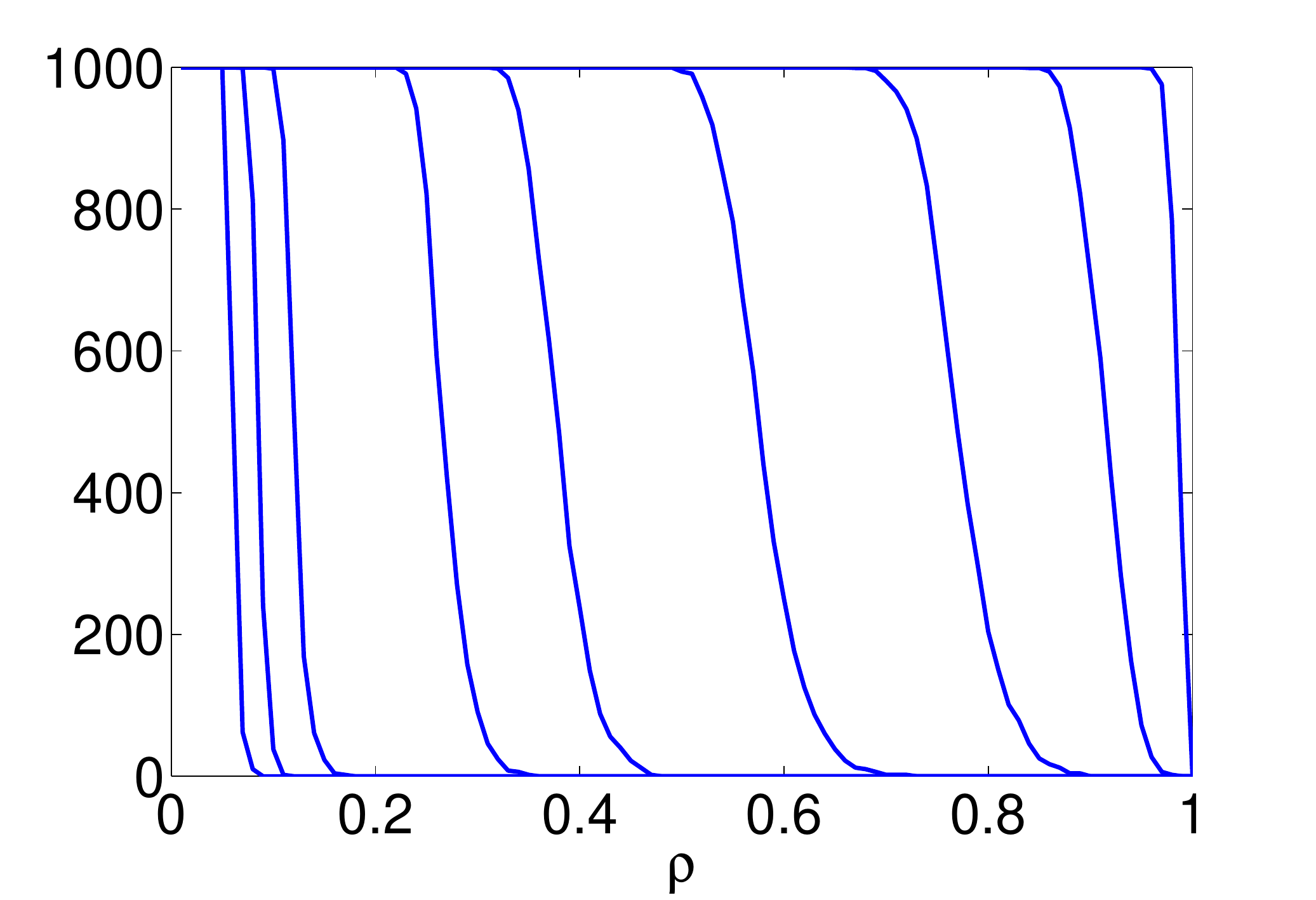}
\end{center}
\caption{Phase transition phenomenon: the number of $1$-hubs in the sample correlation graph corresponding to uncorrelated complex Gaussian variables as a function of correlation threshold $\rho$. Here, $p=1000$ and the plots from left to right correspond to $m = 2000, 1000, 500, 100, 50, 20, 10, 6$ and $4$, respectively.}
\label{fig:PhaseTrans}
\end{figure}

\begin{table}
\begin{center}
\begin{tabular}{|c|c|c|c|c|c|c|c|c|c|}  \hline
$m$ & 2000 & 1000 & 500 & 100 & 50 & 20 & 10 & 6 & 4  \\ \hline
$\rho_{c,\delta}$ & 0.05  &  0.07  &  0.10  &  0.24  &  0.35  &  0.56  &  0.78  &  0.94  &  0.99  \\ \hline
\end{tabular}
\caption{The value of critical threshold $\rho_{c,\delta}$ obtained from formula \eqref{CriticalThreshold} for $p=1000$ complex variables and $\delta=1$. The predicted $\rho_{c,\delta}$ approximates the 
phase transition thresholds in Fig. \ref{fig:PhaseTrans}.}
\label{table:PTpredict}
\end{center}
\end{table}

\begin{table}[!h]
\begin{center}
\begin{tabular}{|c||c|c|}  \hline
degree threshold &  empirical ($\mathbb E[N_{\delta,\rho}]$)  & predicted ($\mathbb E[N_{\delta,\rho}]$) \\ \hline
$d_i\geq \delta =1$ & 284  & 335 \\ \hline
$d_i\geq \delta =2$ & 45 & 56 \\ \hline
$d_i\geq \delta =3$ & 5  & 6 \\ \hline
$d_i\geq \delta =4$ & 0  & 0 \\ \hline
\end{tabular}
\caption{Empirical average number of discovered hubs vs. predicted average number of discovered hubs in an uncorrelated complex Gaussian network. Here $p=1000$, $m=100$, $\rho = 0.28$. The empirical values are obtained by performing $1000$ independent experiments.}
\label{table:EN}
\end{center}
\end{table}

\subsection{Asymptotic Independence of Spectral Components 
for AR(1) Model}
To illustrate the asymptotic independence property and convergence rate of Theorem \ref{thm:independence}, we considered the simple case of an AR(1) process,
\be
X(k) = \varphi_1 X(k-1) + \varepsilon(k), ~~~~~ k \geq 1,
\label{eq:AR1Example}
\ee
in which $X(0) = 0, \varphi_1 = 0.9$ and $\varepsilon(.)$ is a stationary Gaussian 
process with no temporal correlation and standard deviation $1$. We performed Monte-Carlo simulations to compute the correlation between 
spectral components at different frequencies for window sizes $n = 10, 20, \ldots, 250$. More specifically, we set $k = 1$ and $l = 2$ and empirically estimated 
$|\mathrm{cor}\left( Y(k),Y(l) \right)|$ using $50000$ Monte-Carlo trials for each value of window size $n$. Figure \ref{fig:AR1cor} shows the result of this experiment. It is observable that the magnitude of $\mathrm{cor}\left( Y(k),Y(l) \right)$ is bounded above by the function $10/n$. This observation is consistent with Theorem \ref{thm:independence}.

\begin{figure}[htb]
\begin{center}
\includegraphics[height=7cm]{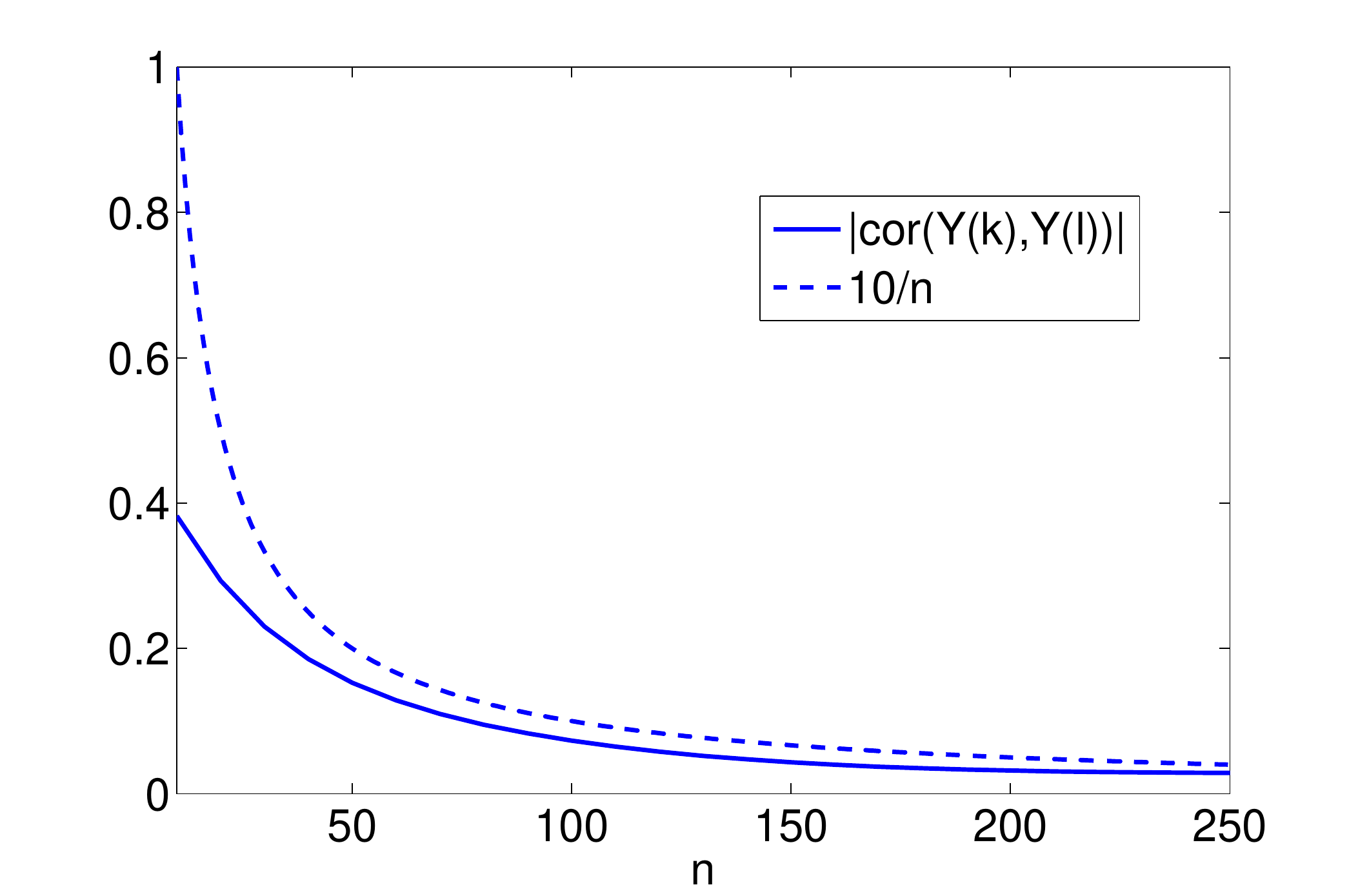}
\end{center}
\caption{Correlation coefficient $|\mathrm{cor}\left( Y(1),Y(2) \right)|$ as a function of window size $n$, 
empirically estimated using $50000$ Monte-Carlo trials. Here $Y(.)$ is the DFT of the AR($1$) process \eqref{eq:AR1Example}. 
The magnitude of the correlation for $n=10, 20, \ldots, 250$ is bounded above by the function $10/n$. This observation is consistent with the convergence rate 
in Theorem \ref{thm:independence}.}
\label{fig:AR1cor}
\end{figure}

\subsection{Spectral Correlation Screening of a Band-Pass Multivariate Time Series}

Next we analyzed the performance of the proposed complex-valued correlation screening framework on a synthetic data set for which the expected results are known.

We synthesized a multivariate stationary Gaussian time series using the the following procedure. Here we set $p=1000, N=12000$ and $m=n=100$. The discrepancy between $N$ and the product $mn$ is explained below. Let $X(k), 0 \leq k \leq N-1$ be a sequence of i.i.d.\ zero-mean Gaussian random variables (i.e.\ white Gaussian noise) with standard deviation of $1$. The $p$ time series $X^{(1)}(k),\ldots,X^{(p)}(k), 0 \leq k \leq N-1$ are obtained from $X(k)$ by band-pass filtering and adding independent white Gaussian noise.  Specifically,
\ben
X^{(i)}(k) = h_i(k) \conv X(k) + N_i(k),~~~~ 1 \leq i \leq p, 0 \leq k \leq N-1,
\een
in which $\conv$ represents the convolution operator, $h_i(.)$ is the impulse response of the $i$th band-pass filter and $N_i(.)$ is an independent white Gaussian noise series whose standard deviation is $0.1$.  Since stable filtering of a stationary series results in another stationary series, the obtained series $X^{(1)}(k),\ldots,X^{(p)}(k)$ are stationary and Gaussian. For $i = 10l, 1 \leq l \leq 50$, $h_i(k)$ is the impulse response of a band-pass filter with pass band $f \in [(4l-1)/400, 4l/400]$. We approximate the ideal band-pass filters with finite impulse response (FIR) Chebyshev filters \cite{oppenheim1989discrete}. 
Also for $i = 500 + 10l, 1 \leq l \leq 50$ we set $h_i(k) = h_{i-500}(k)$. For all of the other values of $i$ (i.e. $i \neq 10l$) we set $h_i(k)=0, 0 \leq k \leq N-1$. 

Figure \ref{fig:SampleSignals} shows the signal part of the time series (i.e. $h_i(k) \conv X(k)$) for $i = 100, 200, 300, 400$.  It is seen that the first $2000$ samples of the signals 
reflect the transient response of the filters. 
These $2000$ samples are not included for the purpose of correlation screening.  Hence the 
actual number of time samples considered is 
$mn=10000$. Figure \ref{fig:fftMag} shows the magnitude of the DFTs of the signals, $Y^{(i)}(k)$, for $i = 50, 100, \ldots, 500$. The band-pass structure of the signals is clearly observable in the figure.

\begin{figure}[htb]
\begin{center}
\includegraphics[height=7cm]{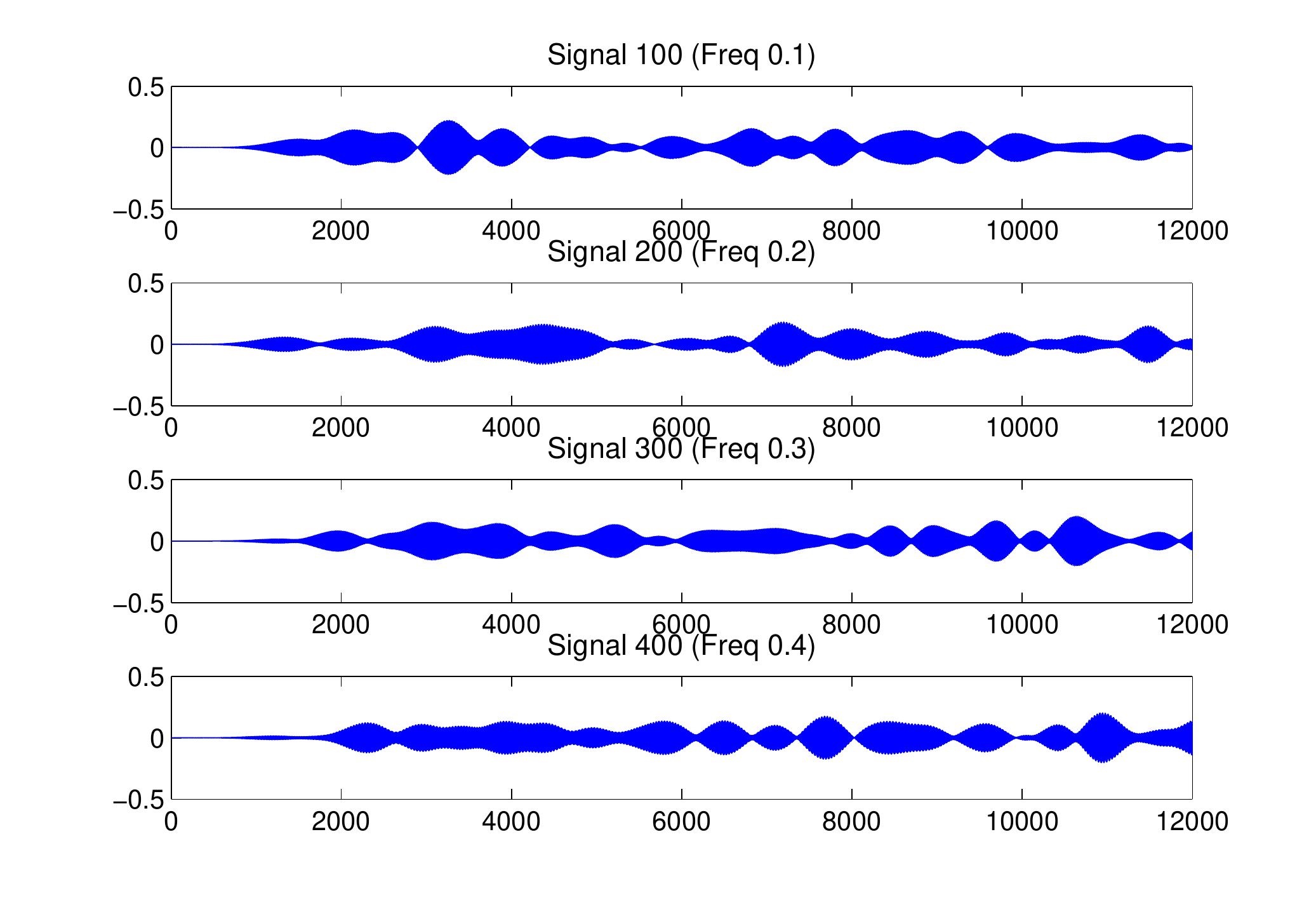}
\end{center}
\caption{Signal part of the band-pass time series $X^{(i)}(k)$ (i.e. $h_i(k) \conv X(k)$) for $i = 100, 200, 300, 400$.}
\label{fig:SampleSignals}
\end{figure}

\begin{figure}[!h]
\begin{center}
\includegraphics[height=7cm]{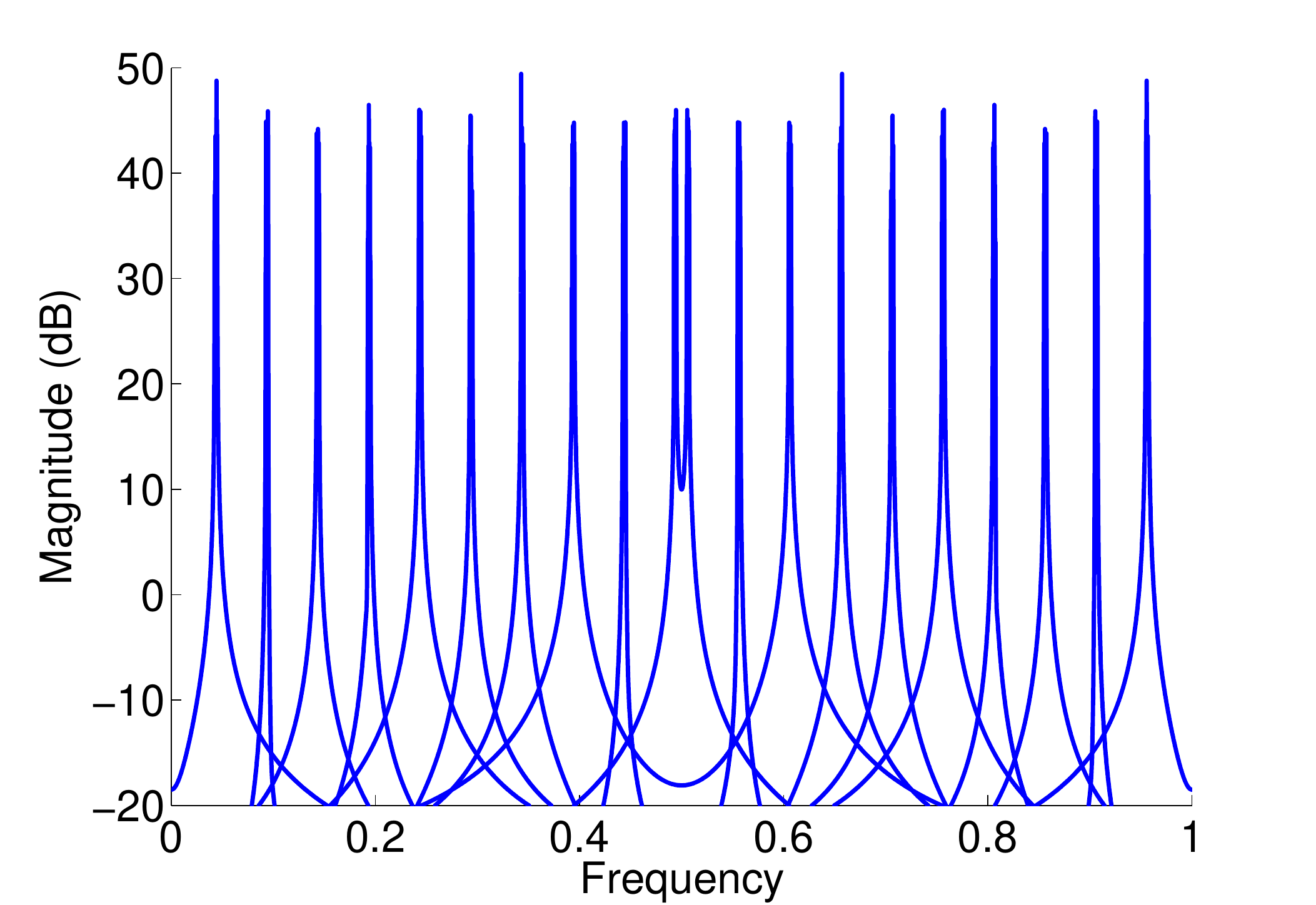}
\end{center}
\caption{
DFT magnitude of the band-pass signals $h_i(k) \conv X(k)$ 
(i.e. $20\log_{10}(|Y^{(i)}(.)|)$) as a function of frequency for $i=50, 100, \ldots, 500$.}
\label{fig:fftMag}
\end{figure}

\begin{figure}[htb]
\begin{center}
\includegraphics[height=4.55cm]{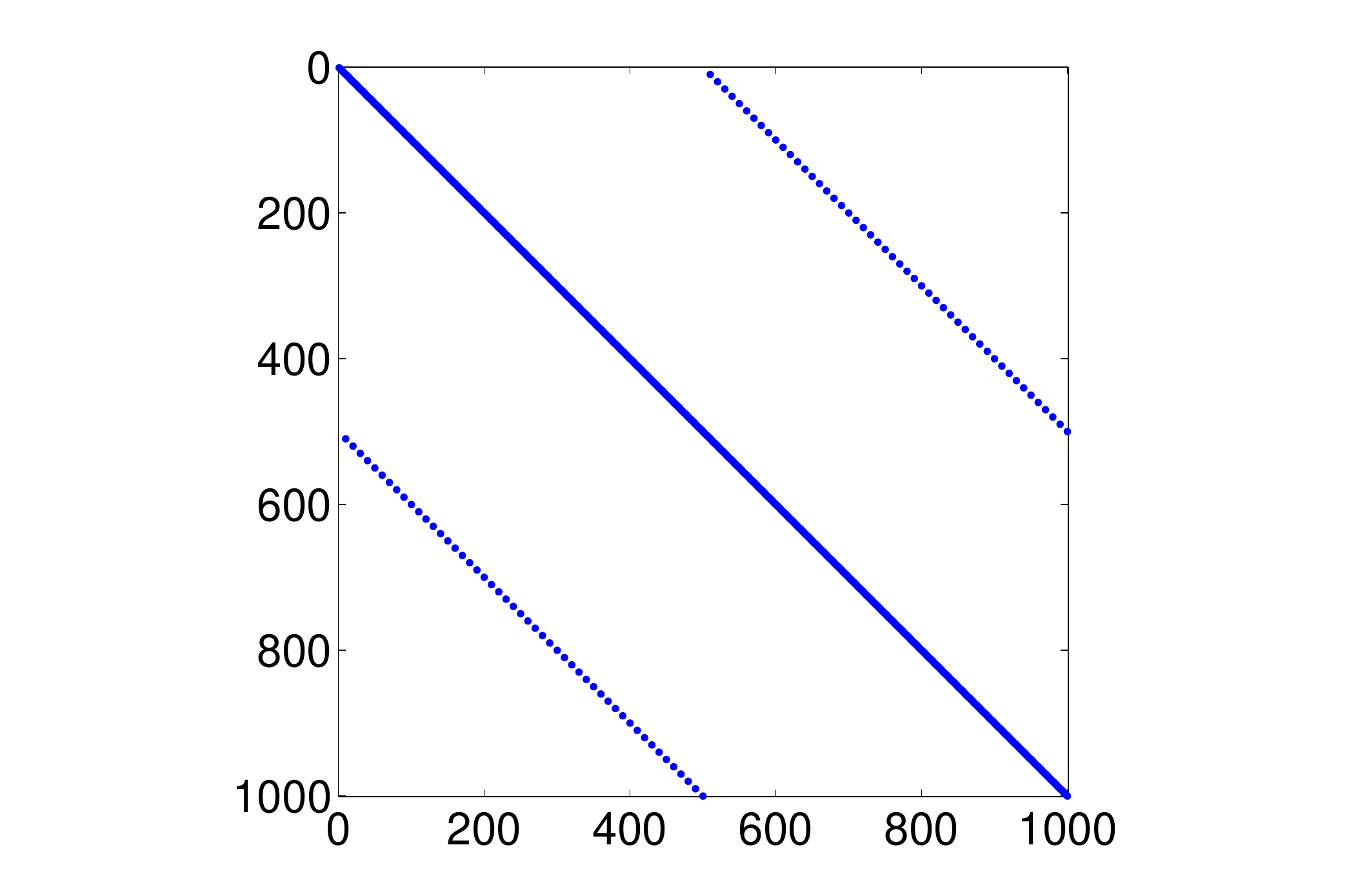}
\includegraphics[height=4.55cm]{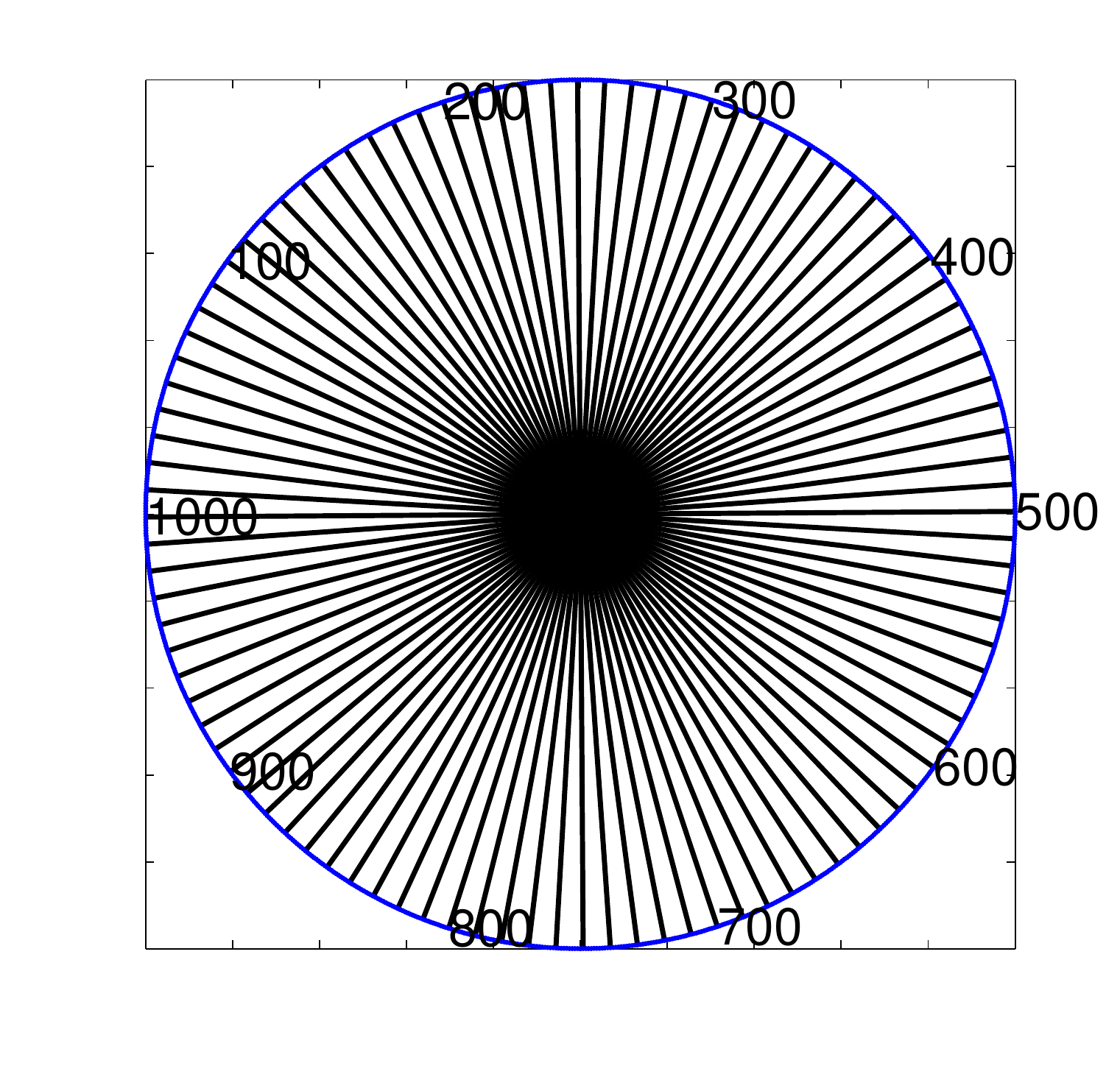}
\end{center}
\caption{(Left) The structure of the thresholded sample correlation matrix in the time domain. (Right) The correlation graph corresponding to the thresholded sample correlation matrix in the time domain.}
\label{fig:spyTime}
\end{figure}

We first constructed a correlation matrix for the time series $X^{(1)}(k),\ldots,$ $X^{(p)}(k)$ from their simultaneous time samples. Figure \ref{fig:spyTime} illustrates the structure of the thresholded sample correlation matrix and the corresponding correlation graph.  Note that this is a real-valued correlation screening problem in the time domain. The correlation threshold used here is $\rho = 0.2$ which is well above the critical threshold $\rho_{c,1} = 0.028$ obtained via formula ($10$) in \cite{hero2012hub} for $p=1000$ and $N = 10000$.

To examine the spectral structure of the correlations in Fig.~\ref{fig:spyTime}, we then performed complex-valued correlation screening on the spectra of the 
time series $X^{(1)}(k),\ldots,X^{(p)}(k)$. Figure \ref{fig:FreqGraphs2} shows the constructed correlation graphs $\mathcal G_{f,\rho}$ for $f=[0.1, 0.2, 0.3, 0.4]$ and correlation threshold $\rho = 0.9$, which corresponds to a $\delta=1$ false positive rate $\mathbb{P}(N_{\delta,\rho}>0) \approx 10^{-65}$ (using $\delta=1$ in the asymptotic relation \eqref{eq:Poissonconv} with $\Lambda_\infty=e_{m,\delta}^\delta/\delta !$ as specified by \eqref{eq:ENdef}). Note that the value of the correlation threshold is set to be higher than the critical threshold $\rho_c =0.24$. It can be observed that performing complex-valued spectral correlation screening at each frequency correctly discovers the correlations between the time series which are active around that frequency. As an example, for $f=0.2$ the discovered hubs (for $\delta = 1$) are the time series $X^{(i)}(k)$ for $i \in \{200, 700\}$. These time series are the ones that are active at frequency $f=0.2$. Under the null hypothesis of diagonal covariance matrices, the p-values for the discovered hubs are of order $10^{-65}$ or smaller. 
These results show that complex-valued spectral correlation screening is able to resolve the sources of correlation between time series in the spectral domain. 

\begin{figure}[!h]
\begin{center}
\includegraphics[height=5.15cm]{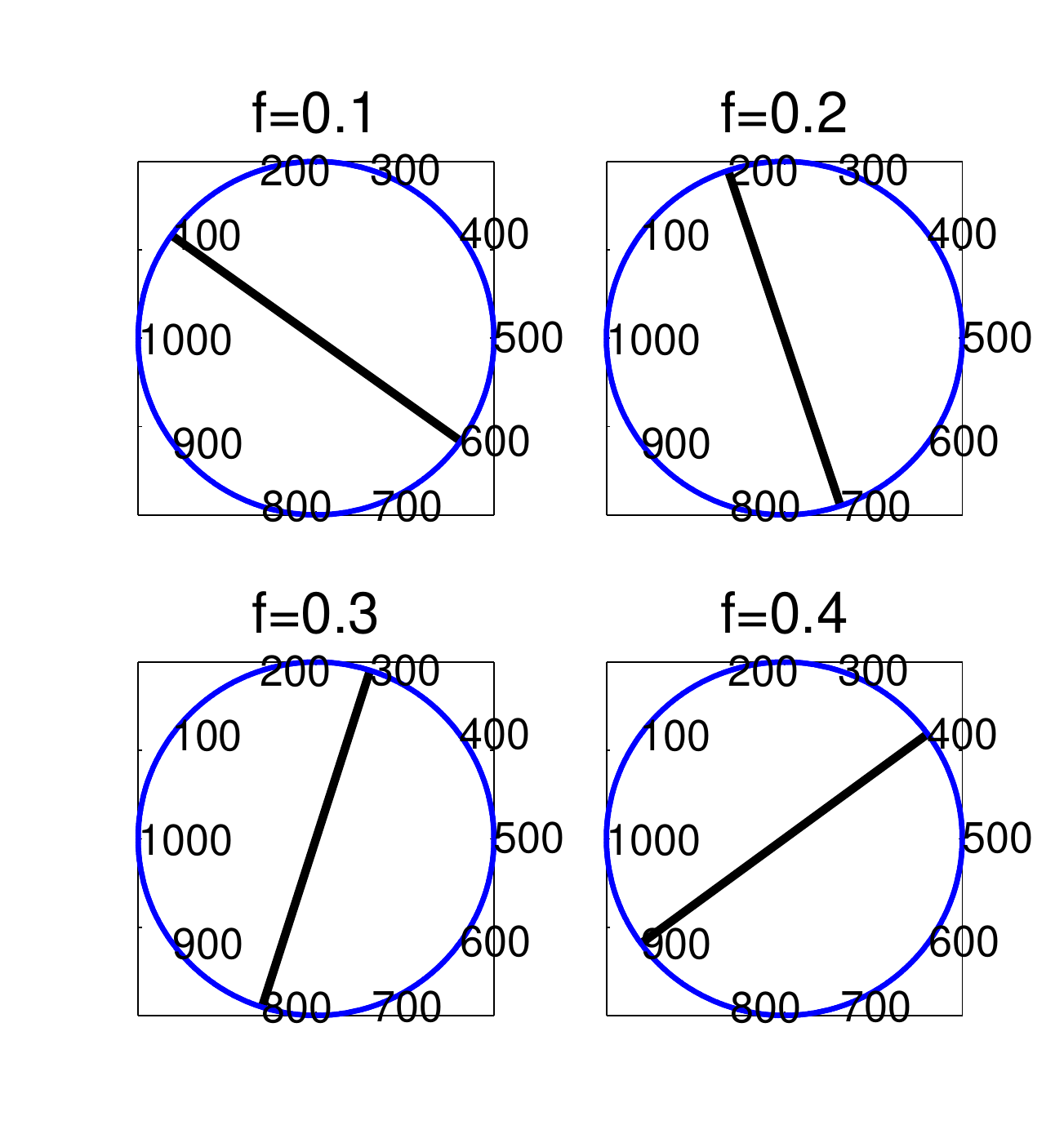}
\includegraphics[height=4.65cm]{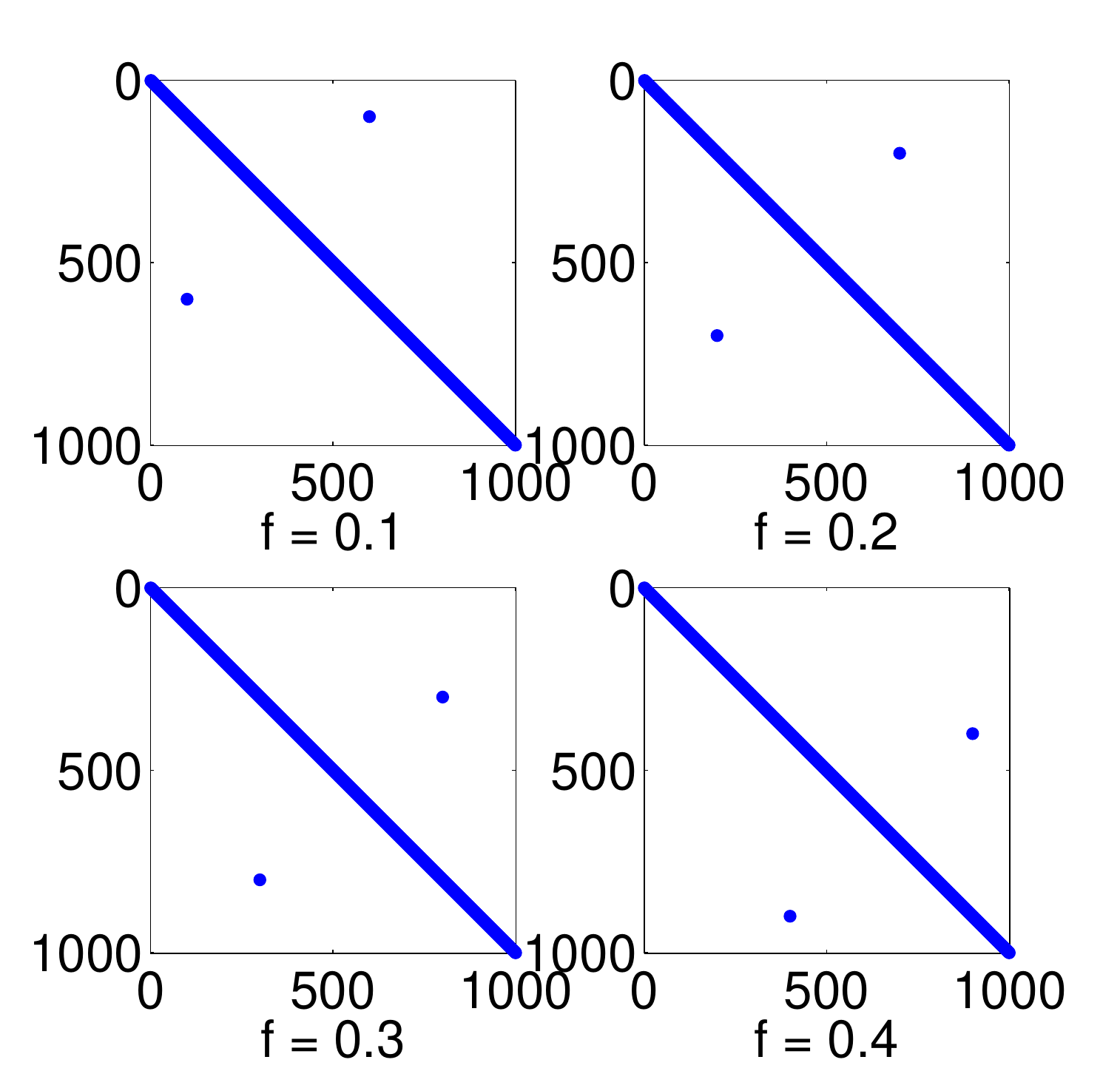}
\end{center}
\caption{Spectral correlation graphs $\mathcal G_{f,\rho}$ for $f=[0.1, 0.2, 0.3, 0.4]$ and correlation threshold $\rho = 0.9$, which corresponds to a false positive probability of $10^{-65}$. The data used here is a set of synthetic time series obtained by band-pass filtering of a Gaussian white noise series with the band-pass filters shown in Fig. \ref{fig:fftMag}. As can be seen, complex correlation screening is able to extract the correlations at specific frequencies. This is not directly feasible in the time domain analysis.}
\label{fig:FreqGraphs2}
\end{figure}

\section{Conclusion}
This chapter presented a spectral method for 
correlation analysis of stationary multivariate Gaussian time series with a focus on identifying correlation hubs. 
The asymptotic independence of spectral components at different frequencies allows the problem to be decomposed into independent problems at each frequency, thus improving computational and statistical efficiency for high-dimensional time series. 
The method of complex-valued correlation screening is then applied to detect hub variables at each frequency.  Using a characterization of the number of hubs discovered by the method, thresholds for hub screening can be selected to avoid an excessive number of false positives or negatives, and the statistical significance of hub discoveries can be quantified.  The theory specifically considers the high-dimensional case where the number of samples at each frequency can be significantly smaller than the number of time series.   
Experimental results validated the theory and illustrated the applicability of complex-valued correlation screening to the spectral domain.

\section{Acknowledgment}
This work was partially supported by AFOSR grant FA9550-13-1-0043.

%
%
 \bibliographystyle{spmpsci}
 \bibliography{Refs}
%


\printindex
\end{document}